  \definecolor{grey}{gray}{0.6}
\long\def\drop#1{}
\newcommand{\epsi}{\ensuremath{\epsilon}}
\newcommand{\pee}{\ensuremath{\mathbb{P}}}
\newcommand{\be}{\begin{equation}}
\newcommand{\ee}{\end{equation}}
\newcommand{\caM}{{\mathcal M}}
\newcommand{\beq}{\begin{eqnarray}}
\newcommand{\eeq}{\end{eqnarray}}
\newcommand{\loc}{\mathcal{L}}
\def\R{\mathbb R}
\def\Z{\mathbb Z}
\def\E{\mathbb E}
\def\LBMP{L^{\mathrm{BMP}}}
\def\div{\mathop{\mathrm{div}}}
\def\M{\mathcal M}
\let\weakto\rightharpoonup
\def\Indicator{\mathds{1}}
\DeclareMathOperator\wlim{w-lim}
\newtheorem{theorem}{Theorem}
\newtheorem{lemma}[theorem]{Lemma}
\newcommand\T{\mathbb T}
\begin{document}
\title{Large Deviations in Stochastic Heat-Conduction Processes Provide a Gradient-Flow Structure for Heat Conduction}
\author{Mark A. Peletier}
\email{m.a.peletier@tue.nl}
\affiliation{Department of Mathematics and Computer Science and Institute for Complex Molecular Systems, Technische Universiteit Eindhoven, Postbus 513, 5600 MB Eindhoven, The Netherlands}
\author{Frank Redig}
\email{f.h.j.redig@tudelft.nl}
\affiliation{Delft Institute of Applied Mathematics,
Technische Universiteit Delft,  Mekelweg 4, 2628 CD Delft, The Netherlands}
\author{Kiamars Vafayi}
\email{k.vafayi@tue.nl}
\affiliation{Department of Mathematics and Computer Science, Technische Universiteit Eindhoven, Postbus 513, 5600 MB Eindhoven, The Netherlands}

\date{\today}

\begin{abstract}
We consider three one-dimensional continuous-time Markov processes on a lattice, each of which models the conduction of heat: the family of Brownian Energy Processes with parameter $m$, a Generalized Brownian Energy Process, and the Kipnis-Marchioro-Presutti process.
The hydrodynamic limit of each of these three processes is a parabolic equation, the linear heat equation in the case of the BEP$(m)$ and the KMP, and a nonlinear heat equation for the GBEP($a$). We prove the hydrodynamic limit rigorously for the BEP$(m)$, and give a formal derivation for the GBEP($a$).

We then formally derive the pathwise large-deviation rate functional for the empirical measure of the three processes. These rate functionals imply gradient-flow structures for the limiting linear and nonlinear heat equations. We contrast these gradient-flow structures with those for processes describing the diffusion of mass, most importantly the class of Wasserstein gradient-flow systems. The linear and nonlinear heat-equation gradient-flow structures are each driven by entropy terms of the form $-\log \rho$; they involve dissipation or mobility terms of order $\rho^2$ for the linear heat equation, and a nonlinear function of $\rho$ for the nonlinear heat equation.
\end{abstract}

\maketitle

\section{The Heat Equation}

The equation
\begin{equation}
\label{eq:heat}
\partial_t \rho = \Delta \rho
\end{equation}
is known both as the `diffusion equation' and as the `heat equation'. The two names reflect the fact that one arrives at the same equation via two completely different modeling routes---one describing the motion of particles, the other the conduction of heat.

Recently, one of these modeling routes has received special attention. Inspired by the observation that equation~\eqref{eq:heat} is a Wasserstein gradient flow (we describe this below), and that the Wasserstein gradient flows have many interesting and useful properties, we asked the question `how is the Wasserstein gradient-flow structure of~\eqref{eq:heat} related to the modeling of~\eqref{eq:heat}?'
We answered this question~\cite{AdamsDirrPeletierZimmer11,AdamsDirrPeletierZimmer12TR,DuongLaschosRenger13} by connecting the Wasserstein gradient-flow structure to the large-deviation behavior of stochastic particle systems.

Before turning to the content of the present paper, we briefly describe these previous results. A \emph{gradient flow} on a linear space $X$ is formally defined by a functional $E:X\to\R$ and a symmetric  operator $K:X'\to X$ (the so-called \emph{Onsager operator})
where $X'$ denotes the dual space of $X$. The equation~\eqref{eq:heat} has many such gradient-flow structures, each corresponding to a different triple $(X,E,K)$. One example is $X=X'=L^2(\R^n)$, $E(\rho) = \frac12\int_{\R^n} |\nabla \rho|^2$, and $K\xi = \xi$; then by definition
\[
\partial_t \rho = -K\Bigl(\frac{\delta E}{\delta \rho}(\rho)\Bigr) = -K (-\Delta \rho) = \Delta \rho.
\]
Note that we specify $K$ by describing its effect on the variational derivative $\delta E/\delta \rho$ of $E$. Other examples are
\begin{alignat}3
&X =H^{-1}(\R^n), &\qquad & E(\rho) = \frac12 \int_{\R^n} \rho^2, &\qquad & K\xi = -\Delta\xi,
\notag\\
&X = L^1_{\geq0}(\R^n), && E(\rho) = \int_{\R^n} \rho\log \rho, && K_\rho\xi = -\div \rho\nabla \xi,\label{ex:WassersteinGF}\\
&X = L^1(\R^n), && E(\rho) = \int_{\R^n} f(\rho), && K_\rho\xi = -\div \Bigl(\frac1{f''(\rho)}\nabla \xi\Bigr).
\label{ex:general}
\end{alignat}
Each of these gradient-flow structures leads to the same equation~\eqref{eq:heat}.
The last two examples show that $K$ may depend on $\rho\in X$; in geometric terms, $K$ maps the cotangent bundle to the tangent bundle, and therefore may depend on position.

Example~\eqref{ex:WassersteinGF} is the Wasserstein gradient-flow structure that was mentioned above. We showed in~\onlinecite{AdamsDirrPeletierZimmer11,AdamsDirrPeletierZimmer12TR} how the Wasserstein structures of this equation and of more general convection-diffusion equations  arise from the properties of an underlying stochastic process: systems of particles whose positions are described by a stochastic differential equation. The large deviations of empirical measures of these particles directly identify a gradient-flow structure, which is of the form~\eqref{ex:WassersteinGF}. This identification consists of two parts:
\begin{enumerate}
\item The equilibrium large-deviation rate functional of the \emph{invariant measure} is the driving functional $E$;
\item The non-equilibrium large-deviation rate functional of the \emph{law of the time-courses of the empirical measure} identifies the operator $K$.
\end{enumerate}
Through these identifications, the properties of this gradient-flow structure can be traced back to the properties of the underlying stochastic process. For instance, the presence or absence of interaction only appears in the form of the driving functional~$E$, and the form of $K$ is completely determined by the Brownian noise in the SDE. As a result, the Wasserstein gradient-flow structure of~\eqref{eq:heat} can now be completely explained in terms of properties of such underlying models of diffusion. The same arguments apply to various related equations~\cite{PeletierRengerVeneroni13,DuongPeletierZimmer12TR,DuongPeletierZimmer13,Renger13TH}, and we have recently showed how it is shared among all processes with detailed balance~\cite{MielkeRengerPeletier13TR}. The insight that this provides is of significant use in the modeling of other real-world systems.

However, this explanation only applies to \emph{diffusion}---to the derivation of~\eqref{eq:heat} in which $\rho$ is a \emph{density of particles} that evolves by thermal agitation. For the other interpretation, where $\rho$ is a \emph{temperature} or a \emph{heat content}, and where heat disperses through heat conduction, very little is known in this direction. Is there a corresponding gradient-flow structure of~\eqref{eq:heat}, that derives from large deviations of some stochastic process? If so, is it the same as for diffusion, or a different one?

\medskip

In this paper we investigate this question. We study stochastic processes that mimic the \emph{conduction of heat:} in these processes a local `internal' energy is exchanged through random nearest-neighbor interactions. The hydrodynamic or continuum limit of these systems is again a parabolic partial differential equation---and in some of the cases it is exactly the linear heat equation~\eqref{eq:heat}. As in the case of diffusion, the large deviations away from this continuum limit identify a natural gradient-flow structure, that we discuss in detail below.

In fact, we will consider three stochastic processes:
\begin{enumerate}
\item The \emph{Brownian Energy Process} with parameter $m$ (BEP($m$), where $m$ is a positive real number);
\item A \emph{Generalized} Brownian Energy Process with parameter $a$ (GBEP($a$), where $a$ is a positive function);
\item The \emph{Kipnis-Marchioro-Presutti} process (KMP).
\end{enumerate}
Each of these describes heat conduction in a slightly different way, and we will use the similarities and differences to investigate how different aspects of these stochastic processes lead to different continuum limits, and to different corresponding gradient-flow structures. This will allow us to link the properties of the gradient-flow ingredients to specific aspects of the underlying stochastic processes.

\medskip

The outline of this paper is as follows. In Section~\ref{sec:defs-of-processes} we describe the three processes mentioned above. Most of the further development is done for the BEP($m$): in Sections~\ref{sec:eq-props-BEP} and~\ref{sec:eq-LDP} we study the equilibrium properties of the BEP, and in Sections~\ref{sec:HDL}, \ref{sec:WABEP}, and~\ref{sec:LDP} the hydrodynamic limit  and the corresponding large deviation principle. In Section~\ref{sec:otherprocesses} we discuss how the corresponding properties of the two other processes can be derived. Finally, in Section~\ref{sec:GF} we discuss the gradient-flow structures that these large deviations suggest, and in Section~\ref{sec:discussion} the implications of these structures.

\section{The three stochastic processes: BEP($m$), GBEP($a$), and KMP}
\label{sec:defs-of-processes}

Heat conduction is the transfer of internal energy between neighboring parts of a material. In the most straightforward  microscopic models of heat conduction, matter is represented as a collection of oscillators on lattice points; each oscillator is a little isolated Hamiltonian system, such as  a little mass-spring system. If there is no interaction, then the oscillator has a finite energy---the Hamiltonian---that is conserved over time.

Heat conduction arises when one introduces interaction between the oscillators, either deterministically or stochastically, allowing energy exchange between neighboring sites. Deterministic interaction between neighboring oscillators is natural, since both classical and quantum descriptions of reality are Hamiltonian; however, the difficulties surrounding the analysis of deterministic heat-conducting systems are formidable (see e.g.~\onlinecite{BonettoLebowitzRey-Bellet00,Dhar08}). For this reason the interaction is often chosen to be stochastic, and many varieties exist in addition to the ones we study here; see e.g.~\onlinecite{KipnisMarchioroPresutti82,BernardinOlla05,BasileBernardinOlla06,Bernardin07,BricmontKupiainen07,Bernardin08,BasileBernardinOlla09,BernardinOlla11}.

The models that we study in this paper are all relatives of a single, prototypical model, that we now describe. Afterwards, we introduce the three models which are the subject of this paper and explain how they are related.

\subsection{The Brownian Momentum Process}

\label{sec:BMP}

In the \emph{Brownian Momentum Process}, physical space is represented by a one-dimensional lattice of $N$ points $i=1,\dots,N$; we silently assume periodicity modulo $N$, i.e.\ the point $i=N+1$ is the same as the point $i=1$. The \emph{state} of the BMP process is a vector of \emph{momenta} $p_i$, $i=1,\dots,N$. The BMP process is characterized by its generator $\LBMP$ that acts on smooth functions $f:\R^N\to\R$, given by
\begin{equation}
\label{def:L-BMP}
\LBMP = \sum_{i=1}^N \LBMP_{i,i+1},
\qquad \LBMP_{i,i+1}f = \frac12 \Bigl(p_i\frac\partial{\partial p_{i+1}}-p_{i+1}\frac\partial{\partial p_i}\Bigr)^2 f.
\end{equation}
Equivalently, the $p_i$ solve the It\^o stochastic differential equation
\begin{equation}
\label{eq:sde-pi}
dp_i = p_{i+1}dB_{i,i+1} - p_{i-1} dB_{i-1,i} - p_i dt,
\end{equation}
where the $B_{i,i+1}$ are independent Brownian motions.

The pairwise generator $\LBMP_{i,i+1}$ can be interpreted as a Brownian motion on the circle given by the condition $p_i^2 + p_{i+1}^2 = \mathrm{constant}$. This follows from noting that  $\LBMP_{i,i+1}$ preserves $p_i^2 + p_{i+1}^2$, and in terms of the angle coordinate $\varphi = \arctan (p_{i+1}/p_i)$ we can write
$\LBMP_{i,i+1} = \tfrac12 \partial^2/\partial\varphi^2$, representing a simple Brownian motion in $\varphi$. In terms of the SDE~\eqref{eq:sde-pi}, the conservation of $p_i^2+p_{i+1}^2$ can be observed in the double occurrence of the noise $dB_{i,i+1}$, once in the equation for $p_i$ (with prefactor $p_{i+1}$) and once in the equation for $p_{i+1}$, with the opposite sign, and with prefactor $p_i$. The term $-p_i dt$ is an It\^o correction that compensates for the asymmetry in the definition of an It\^o SDE.

This process is a model of heat conduction in the following sense.
Up to a constant, the sum $p_i^2 + p_{i+1}^2$ should be interpreted as the total kinetic energy of the two sites $i$ and $i+1$. Therefore each neighbour interaction, characterized by $\LBMP_{i,i+1}$,  is a Brownian motion over all possible distributions of the preserved kinetic energy over the two momenta $p_i$ and $p_{i+1}$, and therefore implements a stochastic exchange of energy between neighbouring sites.

\subsection{The Brownian Energy Process with parameter $m$}

Because of our focus on heat conduction, we shall be interested in tracking the squares $z_i := p_i^2$, i.e.\ the site energies, rather than in the momenta $p_i$ themselves. The \emph{Brownian Energy Process} (BEP), with parameter \emph{one}, is the process that governs the evolution of the vector of energies $(z_1,\dots,z_N)$. It happens to be a Markov process itself: the evolution of the $z_i$ can be characterized  in terms of the $z_i$ themselves, with generator
\begin{equation*}
L^{\mathrm{BEP}}=\sum_{i=1}^{N} L^{\mathrm{BEP}}_{i,i+1},
\end{equation*}
where
\begin{equation}\label{bepgen}
L^{\mathrm{BEP}}_{i,i+1} =  2z_i z_{i+1}
\left(\partial_i-\partial_{i+1}\right)^2 -(z_i-z_{i+1})\left(\partial_i-\partial_{i+1}\right).
\end{equation}
Here we write $\partial_i := {\partial}/{\partial z_i}$ for the derivative with respect to the  variable $z_i$. The corresponding SDE is
\begin{equation}
\label{eq:sde-zi}
dz_i = 2\sqrt{z_iz_{i+1}} \, dB_{i,i+1} - 2\sqrt{z_{i-1}z_i}\, dB_{i-1,i} + (z_{i+1}-2z_i+z_{i-1})dt.
\end{equation}
This process is the Brownian Energy Process---with parameter one.

\medskip
The \emph{Brownian Energy Process with parameter $m$}, BEP($m$), is a generalization of this process, in which not a single scalar quantity $p_i$, but $m$ scalar quantities $p_i^j$, $j=1,\dots m$ are defined at each lattice point $i$. We define the generator for this $m$-fold BMP by imposing the same interaction between all $m\times m$ neighboring pairs, i.e.
\[
L^{\mathrm{BMP}(m)} = \sum_{i=1}^N L^{\mathrm{BMP}(m)}_{i,i+1},\qquad
L^{\mathrm{BMP}(m)}_{i,i+1} = \sum_{j,j' = 1}^m \frac12 \Bigl(p_i^j\frac\partial{\partial p_{i+1}^{j'}}-p_{i+1}^{j'}\frac\partial{\partial p_i^j}\Bigr)^2.
\]
The total energy for each site $i$ is now defined as $z_i := \sum_{j=1}^m (p_i^j)^2$, and this is again a Markov process with generator
\begin{equation}
\label{def:L-BEPm}
L^{\mathrm{BEP}(m)} = \sum_{i=1}^N L^{\mathrm{BEP}(m)}_{i,i+1},\qquad
L^{\mathrm{BEP}(m)}_{i,i+1} = 2z_iz_{i+1}\left(\partial_i-\partial_{i+1}\right)^2 -m(z_i-z_{i+1})\left(\partial_i-\partial_{i+1}\right).
\end{equation}
Note the small difference with the generator~\eqref{bepgen} of the BEP(1): only the coefficient of the drift term is $m$-dependent. Also note that for the definition of the process $z_i$, $m$ need not be integer---any $m>0$ is admissible, although the connection to an underlying Momentum Process of course only exists for integer $m$.

\medskip
The Brownian Momentum Process was introduced in~\onlinecite{GiardinaKurchanRedig07} with imposed-temperature boundary conditions, and further studied in~\onlinecite{GiardinaKurchanRedigVafayi09,GiardinaRedigVafayi10,RedigVafayi11}.
In~\onlinecite{RedigVafayi11} we studied non-equilibrium stationary states of this system, generated by imposing different temperatures at the two ends of a one-dimensional system. We showed, among other things, that the expectation of the temperature along these non-equilibrium steady states is linear. This is of course consistent with non-equilibrium steady states of the macroscopic equation~\eqref{eq:heat}. The Brownian Energy Process was introduced in~\onlinecite{GiardinaKurchanRedigVafayi09} and further studied in \onlinecite{GiardinaRedigVafayi10,GrosskinskyRedigVafayi11}.

\subsection{A Generalized Brownian Energy Process with parameter $a$}

For reasons that will become clear below, it is interesting to consider a different generalization of the BEP, in which the noise terms in~\eqref{eq:sde-pi} and~\eqref{eq:sde-zi} are modified in an energy-conserving manner, as follows. Choose a continuous function $a:[0,\infty)\to(0,\infty)$, and define
\[
a_{i,i+1}:= a\Bigl(\tfrac12(z_i+z_{i+1})\Bigr).
\]
The \emph{Generalized Brownian Momentum Process} is the SDE
\begin{equation}
\label{def:GBMP}
dp_i = p_{i+1}a_{i,i+1}\, dB_{i,i+1} - p_{i-1} a_{i-1,i}\, dB_{i-1,i} - \frac12 p_i(a_{i,i+1}^2 + a_{i-1,i}^2) \,dt,
\end{equation}
and the corresponding \emph{Generalized Brownian Energy Process} with parameter $a$ is the SDE
\begin{equation}
\label{def:GBEP}
dz_i = 2a_{i,i+1}\sqrt{z_iz_{i+1}} \, dB_{i,i+1} - 2a_{i-1,i}\sqrt{z_{i-1}z_i}\, dB_{i-1,i} +  \bigl((z_{i+1}-z_i)a_{i,i+1}^2 + (z_i-z_{i-1})a_{i-1,1}^2\bigr)\,dt,
\end{equation}
with generator
\[
L^{\mathrm{GBEP}(a)} :=\sum_{i=1}^N a_{i,i+1}^2\biggl[ 2z_iz_{i+1} (\partial_i-\partial_{i+1})^2
  + (z_{i+1}-z_i)(\partial_i-\partial_{i+1})\biggr]
  = \sum_{i=1}^N a_{i,i+1}^2 L^{\mathrm{BEP}}_{i,i+1}.
\]
As far as we are aware, this process has not been discussed in the literature.

\subsection{The Kipnis-Marchioro-Presutti process}

In~\onlinecite{KipnisMarchioroPresutti82} Kipnis, Marchioro, and Presutti introduced  a model for heat conduction. In the KMP process energy is not exchanged continuously, as in the processes described above, but discretely: at exponentially distributed times we choose a random pair of neighbors, say with energies $z_i$ and $z_{i+1}$; we then equidistribute the joint energy $z_i+z_{i+1}$ over the two sites, with a fraction $s$ on one site and a fraction $1-s$ on the site. Here $s$ is drawn from the uniform distribution on $[0,1]$. The KMP process has the generator\cite{Note1}
\[
(L^{\mathrm{KMP}} f)(z) := 2\sum_{i=1}^N \int_0^1 \bigl[f(z^{i,i+1,s})-f(z)]\, ds,
\]
where the modified $z$ is defined as
\[
z^{i,i+1,s}_j := \begin{cases}
s(z_i+z_{i+1}) & \text{if } j=i,\\
(1-s)(z_i+z_{i+1}) & \text{if } j=i+1,\\
z_j &\text{otherwise}.
\end{cases}
\]

As pointed out in~\onlinecite{GiardinaKurchanRedigVafayi09}, the KMP process is related to the process BEP(2), in the following way. Choose any $i$ and consider the {single-pair generator} $L^{\mathrm{BEP}(m)}_{i,i+1}$. This single-pair generator has a unique invariant measure for every $m$, which describes the equilibrium distribution of energy over the two sites $i$ and $i+1$, assuming all other $z_j$ are fixed. For $m=2$, this measure is exactly the uniform distribution of energy between the two sites $i$ and $i+1$. (The general case is that if $(z_i,z_{i+1}) = (E-e,E+e)$, then $e\in[0,E]$ has probability density $\sim(E^2-e^2)^{m/2-1}$.)

Therefore $L^{\mathrm{KMP}}$ can be considered an `instantaneously equilibrated' or `instantaneously thermalised'~\cite{GiardinaKurchanRedigVafayi09} version of $L^{\mathrm{BEP}(2)}$---as if the exchange between $i$ and $i+1$ is not smeared out over time but concentrated into very short bursts, during which, complete equilibration within that specific bond takes place.

\subsection{Macroscopic quantities}

In each of the three processes BEP($m$), GBEP($a$), and KMP, the variable $z_i$ at site $i$ represents the local or microscopic energy at that site. The natural way to connect this local energy with macroscopic quantities such as temperature and macroscopic energy is as follows.
Given an energy state $Z = \{z_1,z_2,...,z_N\}$, we define the empirical energy measure $\pi_N(Z)$ to be the probability measure
\begin{equation}
\label{empm}
\left[\pi_N (Z)\right](dx):=
\frac{1}{N} \sum_{i=1}^{N} z_i \:
\delta_{{i}/{N}} (dx),
\end{equation}
where $\delta_{i/N}(x)$ is the Delta measure at $x=i/N$. This makes $\pi_N$ a probability measure on the flat one-dimensional torus $\T = \R/\Z$, i.e.\ the interval $[0,1]$ with periodic boundary conditions.

We show below how, in the limit $N\to\infty$, $\pi_N$ converges weakly to a deterministic limit profile $\rho$, both in equilibrium and in non-equilibrium situations. The limit $\rho$ will be interpreted as macroscopic energy. The large deviations away from the deterministic limit will provide us with the structure of the corresponding gradient flow.

\section{Equilibrium properties of the BEP($m$)}
\label{sec:eq-props-BEP}

In the next few sections we focus on the BEP($m$); we study the hydrodynamic limit and the large deviations of the model, and we derive from the large deviations the corresponding gradient-flow structure. In Section~\ref{sec:otherprocesses} we describe the differences for the two other processes.

\medskip

We first consider equilibrium properties.
Since $L^{\mathrm{BEP}(m)}\bigl[\sum_i z_i\bigr]= 0$, the BEP($m$) deterministically conserves the total energy $\sum_i z_i$---this property is also natural from the construction of the process. As a consequence the process does not have a single invariant measure, but a large collection of invariant measures, generated by a particular one-parameter family of invariant measures, indexed by a parameter $\theta>0$. For each~$\theta$,  with the interpretation of temperature, this invariant measure is a product measure over the sites $i$:
\begin{equation}
\label{def:nu}
\nu_{N, \theta,m} (dZ)=\prod_{i=1}^N \nu_{\theta,m} (dz_i),
\qquad\text{where}\qquad
\nu_{\theta,m} (dz) = \frac{1}{\theta^{m/2} \Gamma \left(\frac{m}{2}\right)}z^{-1+{m}/{2} } e^{-z/\theta}\, dz,
\end{equation}
and $\Gamma$ is the Gamma-function
\begin{equation*}
\Gamma (k) := \int_0^{\infty} e^{-t} t^{k-1} dt.
\end{equation*}
The process is reversible with respect to each of these invariant measures.

In equilibrium, since this invariant measure is a product measure, to every parameter $\theta$ corresponds a uniform (i.e., $i$-independent) average energy density $\rho_0$ given by
\begin{equation}
\label{def:rho0}
\rho_0(\theta)=
\E_{\nu_{\theta,m}} [ z_i ]= \frac{1}{\theta^{m/2} \Gamma \left(\frac{m}{2}\right)} \int_0^{\infty} z^{{m}/{2} } e^{-z/\theta}dz=\frac{m\theta}{2}.
\end{equation}
Since there is a one-to-one correspondence between $\theta$ and $\rho_0$, we can equivalently index the empirical measure by $\rho_0$ as $\nu_{N, \rho_0,m}=\nu_{N, \theta(\rho_0),m}$ instead.

Moreover, if $Z$ is distributed according to $\nu_{N, \rho_0,m}$, the empirical energy $\pi_N (Z)$ converges in probability to
the constant measure $\rho_0$, by the law of large numbers:
for each $\delta>0$ and each
test function $\phi\in C_b(\T)$ we have
\begin{equation}
\lim_{N\rightarrow \infty}
\nu_{N,\rho_0,m}\Big( \big|
\langle \pi_N(Z ),\phi\rangle - \langle \rho_0,\phi\rangle
\big| > \delta \Big)=0
\label{llnr}
\end{equation}
where
\[
\langle \rho_0,\phi\rangle:=\int_\T \rho_0(x)\phi(x)\ dx.
\]

\section{Equilibrium Large Deviations of the BEP}
\label{sec:eq-LDP}

We already mentioned in the introduction that for reversible processes, such as the three processes of this paper, the large-deviations rate functional of the invariant measure is the functional that drives the gradient-flow evolution~\cite{MielkeRengerPeletier13TR}. We now calculate this rate functional.

A large-deviation principle states that the probability that the empirical energy $\pi_N(Z)$ deviates considerably from its most likely value $\rho_0$
and thus becomes close to some other profile $\rho$ is
exponentially small in $N$ for large $N$ and is determined by a rate functional $S_{\rho_0,m}$ according to
\begin{equation*}
\nu_{N,\rho_0,m}\left(\pi_N(Z)\approx \rho\right)\sim
\exp\bigl\{-N \, S_{\rho_0,m}(\rho) \bigr\}.
\end{equation*}
More precisely,
\begin{theorem}
For all Borel subsets $A$ of the set of non-negative measures $\M_+(\T)$,
\begin{align}
\label{ldp:invmeasure}
-\inf_{\mathrm{int}\, A}S_{\rho_0,m}&\leq \liminf_{N\to\infty} \frac1N \log \nu_{N,\rho_0,m}(A)
\leq \limsup_{N\to\infty} \frac1N \log \nu_{N,\rho_0,m}(A) \leq -\inf_{\mathrm{cl}\,A}S_{\rho_0,m},
\end{align}
where
\begin{equation}
\label{def:S0}
S_{\rho_0,m} (\rho)
= \frac{m}{2}\int_0^1  \left( \frac{\rho(x)}{\rho_0}-1-\log  \frac{\rho(x)}{\rho_0} \right)\, dx.
\end{equation}
\end{theorem}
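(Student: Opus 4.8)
The plan is to read off this statement as a Cram\'er/G\"artner--Ellis large-deviation principle for the weighted empirical measure of the i.i.d.\ family $(z_i)_{i=1}^N$, each $z_i$ distributed as the Gamma law $\nu_{\theta,m}$ with $\theta = 2\rho_0/m$ (cf.~\eqref{def:rho0}). The entry point is that for any $\phi\in C_b(\T)$ the pairing $\langle\pi_N(Z),\phi\rangle = \frac1N\sum_{i=1}^N \phi(i/N)\,z_i$ is a sum of independent random variables, so its moment generating function factorizes. Using the explicit transform of the Gamma law, $\E_{\nu_{\theta,m}}[e^{\lambda z}] = (1-\theta\lambda)^{-m/2}$ for $\lambda<1/\theta$ and $+\infty$ otherwise, a Riemann-sum argument gives the scaled cumulant generating functional
\[
\Lambda(\phi) := \lim_{N\to\infty}\frac1N\log\E_{\nu_{N,\rho_0,m}}\bigl[e^{N\langle\pi_N,\phi\rangle}\bigr] = -\frac m2\int_0^1 \log\bigl(1-\theta\phi(x)\bigr)\,dx
\]
whenever $\sup_x\phi(x)<1/\theta$, and $\Lambda(\phi)=+\infty$ otherwise.

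Next I would establish exponential tightness of $(\pi_N)$ on $\M_+(\T)$ with the weak topology: the total mass $\pi_N(\T)=\frac1N\sum_i z_i$ is itself a Cram\'er average of i.i.d.\ Gamma variables with finite exponential moments, so its upper tail decays like $e^{-NI(L)}$ with $I(L)\to\infty$, while the sets $\{\mu:\mu(\T)\le L\}$ are weakly compact. Combined with the fact that $\Lambda$ is finite, convex and G\^ateaux differentiable on the open convex set $\{\phi:\sup\phi<1/\theta\}$, and ``steep'' there because the integrand blows up as $\phi(x)\uparrow1/\theta$, this lets me invoke the abstract G\"artner--Ellis theorem (equivalently, the Dawson--G\"artner projective-limit theorem applied to the finite-dimensional projections $\mu\mapsto(\langle\mu,\phi_1\rangle,\dots,\langle\mu,\phi_k\rangle)$, $\phi_j\in C_b(\T)$) and conclude that $\pi_N$ obeys the LDP~\eqref{ldp:invmeasure} with good rate functional the Legendre--Fenchel transform $\Lambda^*(\rho)=\sup_{\phi\in C_b(\T)}\{\langle\rho,\phi\rangle-\Lambda(\phi)\}$.

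It then remains to identify $\Lambda^*$ with $S_{\rho_0,m}$. Since $\Lambda$ is an integral functional, the supremum decouples into a pointwise-in-$x$ maximization of $s\mapsto \rho(x)s+\tfrac m2\log(1-\theta s)$; for $\rho$ absolutely continuous with a.e.\ positive density the maximizer is $s=\theta^{-1}(1-\rho_0/\rho(x))$, and substituting it, together with $\theta^{-1}=\tfrac{m}{2\rho_0}$, collapses the integrand to $\tfrac m2\bigl(\rho(x)/\rho_0-1-\log(\rho(x)/\rho_0)\bigr)$, which is exactly~\eqref{def:S0}. For measures that are not absolutely continuous, or whose density vanishes on a set of positive measure, the supremum is $+\infty$, consistent with the convention that~\eqref{def:S0} defines $S_{\rho_0,m}$ only on densities. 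As a consistency check, $t\mapsto t-1-\log t$ is nonnegative with unique zero at $t=1$, so $S_{\rho_0,m}\ge0$ vanishes only at $\rho\equiv\rho_0$, recovering the law of large numbers~\eqref{llnr}.

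The step I expect to be the main obstacle is this passage from the one-dimensional Cram\'er estimates to the full LDP on the infinite-dimensional space $\M_+(\T)$: checking the hypotheses of the abstract G\"artner--Ellis theorem (lower semicontinuity and the exposed-hyperplane/steepness condition at the boundary $\phi\equiv1/\theta$ of the effective domain of $\Lambda$). An alternative that avoids part of this difficulty is to prove the lower bound directly by an exponential change of measure: tilt each coordinate $z_i$ to the Gamma law with mean $\rho(i/N)$ for a smooth target profile $\rho$, note that the law of large numbers then forces $\pi_N\to\rho$ while the log-likelihood ratio converges to $S_{\rho_0,m}(\rho)$, and extend to general $\rho$ using density of smooth profiles and the lower semicontinuity of $S_{\rho_0,m}$; here the delicate point is the approximation near profiles where $\rho$ is close to $0$, where the $-\log\rho$ term must be controlled.
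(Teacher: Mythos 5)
Your proposal is correct and follows essentially the same route as the paper: compute the scaled cumulant-generating functional $G(\phi)=-\tfrac m2\int_\T\log(1-\theta\phi)$ from independence and the explicit Gamma transform, obtain $S_{\rho_0,m}$ as its Legendre transform by pointwise maximization, and invoke the abstract G\"artner--Ellis theorem on $\M_+(\T)$. The technical condition you flag as the main obstacle (exposed points/steepness at the boundary of the effective domain) is precisely the one non-trivial hypothesis the paper singles out, which it verifies via density of measures with continuous densities bounded away from zero.
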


The existence of a functional $S_{\rho_0,m}$ satisfying~\eqref{ldp:invmeasure} and its explicit form both follow from the G\"artner-Ellis theorem~\onlinecite[Th.~4.5.20]{DemboZeitouni98}; the proof is standard, and we sketch only those details that we need later on.

\begin{proof}
We first calculate $S_{\rho_0,m}$. Set $\theta = 2\rho_0/m$ (following~\eqref{def:rho0}) and define the logarithmic cumulant-generating function as
\begin{equation}
\label{press}
G : C_b(\T) \to\R,\qquad
G(\phi) :=
\lim_{N\rightarrow \infty} \frac{1}{N}
\log \E_{\nu_{N,\theta,m}^{}}
\Big( e^{N \langle \phi,\pi_{N}(Z )\rangle} \Big).
\end{equation}
From the definition of the energy profile we have, for continuous $\phi\in C_b(\T)$,
\begin{equation*}
N \langle \phi,\pi_{N}(Z )\rangle = \sum_{i=1}^{N} \phi_i z_i, \qquad \text{where we write }\phi_i:=\phi(i/N).
\end{equation*}
Due to the independence of different sites in the product measure we can use the single-site generating function
\begin{equation}
\label{calc:single-site-gen-func}
\E_{
\nu_{\theta,m}} [ e^{\phi_i z_i} ]= \frac{1}{\theta^{m/2} \Gamma \left(\frac{m}{2}\right)} \int_0^{\infty} z^{-1+{m}/{2}} e^{-z(-\phi_i+1/\theta)}dz=
\left(\frac{1}{1-\theta \phi_i}\right)^{m/2}
\end{equation}
to calculate
\begin{alignat*}2
G(\phi) &=
\lim_{N\rightarrow \infty} \frac{1}{N}
\log \E_{\nu_{N,\theta,m}}
\Big( \prod_{i} e^{ \phi_i z_i} \Big)
\\
&=\lim_{N\rightarrow \infty} \frac{1}{N}
\log
\Big( \prod_{i} \E_{\nu_{\theta,m}} e^{ \phi_i z_i} \Big)
&\qquad&\text{by independence}
\\
& =
\lim_{N\rightarrow \infty} \frac{1}{N}
\log
 \prod_{i} \left(\frac{1}{1-\theta \phi_i}\right)^{m/2}
&\qquad&\text{by }\eqref{calc:single-site-gen-func}
\\
& =-\frac{m}{2}
\lim_{N\rightarrow \infty} \frac{1}{N}
\sum_{i} \log
  \left({1-\theta \phi_i}\right)
\\
& = -\frac{m}{2}
\int_\T   \log
  \left({1-\theta \phi(x)}\right)\, dx .
\end{alignat*}
(Note how the local nature of the functional $G$ (the fact that $G$ is a simple integral over $\T$) arises from the product  structure of the measure $\nu_{N,\theta,m}$.)
We then obtain the formula~\eqref{def:S0} for $S_{\rho_0,m}$ as the Legendre transform of $G$ by a direct calculation.

After this calculation the proof of the property~\eqref{ldp:invmeasure} follows from~\onlinecite[Th.~4.5.20(c)]{DemboZeitouni98}. There is one non-trivial condition to be verified, which is the density of an appropriate class of exposed points of $\M_+(\T)$; this can be ascertained by noting that the subset of probability measures with Lebesgue density that is continuous and bounded away from zero is dense in $\M_+(\T)$ with the narrow topology and satisfies the remaining requirements.
\end{proof}

\section{Hydrodynamic limit of the time evolution}
\label{sec:HDL}

We now turn to the time evolution of the BEP($m$). First we note that the natural time scale for the empirical measure $\pi_N(Z)$ is $N^2$, which leads us to define the rescaled (accelerated) generator
\begin{equation*}
\label{gen}
L_N^{\mathrm{BEP}(m)} :=   N^2  L^{\mathrm{BEP}(m)} .
\end{equation*}
Next we describe the initial state. Choose a fixed measure $\rho^0\in \M_+(\T)$, and choose a sequence of vectors $Z^0_N = (z_{N,1}^0,\dots,z^0_{N,N})\in \R^N$ such that $\pi_N(Z_N^0) \weakto \rho^0$ as $N\to\infty$. Moreover we assume that $z_{N,i}^0$ are bounded uniformly in $N$. By ${\mathbb P}^{\mathrm{BEP}(m)}_N$ we denote the distribution of the BEP($m$)
process, accelerated by a factor $N^2$---i.e.\ with generator $L^{\mathrm{BEP}(m)}_N$---where the initial condition is equal to $Z_N^0$.
$\E^{\mathrm{BEP}(m)}_N$ is the expectation with respect to  ${\mathbb P}^{\mathrm{BEP}(m)}_N$. From now on we also consider the process $Z$ itself to be accelerated by a factor $N^2$, i.e. $Z$ has law~$\mathbb P^{\mathrm{BEP}(m)}_N$.

The limit of the process will be the unique solution $\rho$ of the linear heat equation,
\begin{equation}
\label{eq:limitBEP}
\partial_t \rho(t,x)
=m\, \partial_{xx} \rho(t,x),\qquad\text{for }x\in \T\text{ and }t>0.
\end{equation}
with initial datum
\[
\rho(0,\cdot) = \rho^0.
\]

The following theorem specifies the exact sense in which $\pi_N(Z)$ converges to $\rho$.
\begin{theorem}
\label{th:hydlimit}
For all $\phi\in C_b(\T)$ and $t>0$,
\begin{equation}
\lim _{N\to \infty}{\mathbb P}^{\mathrm{BEP}(m)}_N
\Big( \big| \langle\pi_N(Z(t)),\phi\rangle
- \langle \rho(t),\phi \rangle \big| > \delta\Big)=0.
\end{equation}
Also a stronger property holds: the whole random trajectory $\{ \pi_N(Z(t)): 0\leq t\leq T\}$
converges {weakly to the Dirac measure concentrated on the trajectory} $\{ \rho(x,t)\, dx: 0\leq t\leq T\}$, in the {Skorokhod}
topology of {$D([0,T];\caM_+(\T))$}. Here $\caM_+(\T)$ is the space of non-negative measures on $\T$ with the topology of weak convergence.
\end{theorem}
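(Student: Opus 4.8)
The plan is the classical martingale (Dynkin) approach to hydrodynamic limits: derive a martingale identity for $\langle\pi_N(Z(t)),\phi\rangle$, show that the martingale part is negligible while the drift closes on the empirical measure, deduce tightness, and identify all subsequential limits with the unique weak solution of~\eqref{eq:limitBEP}. Fix $\phi\in C^\infty(\T)$ and write $X_N^\phi(t):=\langle\pi_N(Z(t)),\phi\rangle=\frac1N\sum_i z_i(t)\phi(i/N)$. A one-line computation from~\eqref{def:L-BEPm} gives $L^{\mathrm{BEP}(m)}z_i=m\,(z_{i-1}-2z_i+z_{i+1})$, i.e.\ the generator acts on a single energy variable as $m$ times the discrete Laplacian. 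After the $N^2$ acceleration and a summation by parts this yields
\[
L_N^{\mathrm{BEP}(m)}X_N^\phi=\frac mN\sum_i z_i\,N^2\bigl(\phi_{i-1}-2\phi_i+\phi_{i+1}\bigr)=m\,\langle\pi_N(Z),\Delta_N\phi\rangle ,
\]
where $\Delta_N\phi\to\phi''$ uniformly. Dynkin's formula then says that $M_N^\phi(t):=X_N^\phi(t)-X_N^\phi(0)-\int_0^t L_N^{\mathrm{BEP}(m)}X_N^\phi(s)\,ds$ is a martingale. The decisive structural point---and the reason the usual local-equilibrium/replacement step is absent here---is that the drift is already a \emph{linear} functional of $\pi_N(Z)$; there is no nonlinear expression in $Z$ that must be replaced by a function of the empirical measure.

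Next I would show $M_N^\phi(t)\to0$. From the carré du champ of~\eqref{def:L-BEPm} one gets $\langle M_N^\phi\rangle(t)=4\int_0^t\sum_i z_i(s)z_{i+1}(s)\,(\phi_i-\phi_{i+1})^2\,ds\le \frac{C_\phi}{N^2}\int_0^t\sum_i z_i(s)z_{i+1}(s)\,ds$, using $(\phi_i-\phi_{i+1})^2\le\|\phi'\|_\infty^2N^{-2}$. Hence everything reduces to the second-moment bound
\[
\sup_{t\le T}\ \E\sum_i z_i(t)z_{i+1}(t)=O(N).
\]
I expect this to be the main (and essentially the only) obstacle. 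It holds because the two-point functions of the BEP($m$) evolve by a \emph{closed} linear system---a consequence of the self-duality of the model, with a dual process of at most two particles---whose sector for $\E[z_j(t)]$ is exactly the accelerated discrete heat equation and whose two-particle sector redistributes correlations by a boundary-modified discrete heat flow; since the initial data $z_{N,i}^0$ are bounded uniformly in $N$ and $i$, a maximum-principle/Lyapunov argument for this linear system gives $\sup_{j,\,t\le T}\E[z_j(t)^2]<\infty$ and hence the displayed bound. Consequently $\langle M_N^\phi\rangle(t)=O(1/N)$, so $M_N^\phi\to0$ in $L^2$, uniformly on $[0,T]$ by Doob's inequality.

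Finally I would assemble the limit. Restricting $\phi$ to a countable dense subset of $C^\infty(\T)$, the finite-variation parts $\int_0^\cdot m\,\langle\pi_N(Z(s)),\Delta_N\phi\rangle\,ds$ are uniformly Lipschitz in $t$ (the total mass $\langle\pi_N(Z(t)),1\rangle=\frac1N\sum_i z_{N,i}^0$ is conserved and bounded), and the martingale parts satisfy Aldous' tightness criterion by the quadratic-variation estimate; since all the $\pi_N(Z(t))$ lie in a fixed weakly compact ball of $\M_+(\T)$, this yields tightness of $\{\pi_N(Z(\cdot))\}$ in $D([0,T];\M_+(\T))$. Passing to the limit along a convergent subsequence in $M_N^\phi(t)\to0$, using $\Delta_N\phi\to\phi''$ uniformly together with $\pi_N(Z^0)\weakto\rho^0$, any limit point is almost surely a weak solution of $\partial_t\rho=m\,\partial_{xx}\rho$ on $\T$ with $\rho(0,\cdot)=\rho^0$. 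Such weak solutions are unique (compare two solutions against the smooth backward heat flow, or expand in Fourier series on $\T$), so the limit is the deterministic trajectory $\{\rho(t,x)\,dx\}$ and the full sequence converges in $D([0,T];\M_+(\T))$; since this trajectory is continuous in $t$, the pointwise-in-$t$ convergence in probability stated first in the theorem follows (extending from $\phi\in C^\infty(\T)$ to $\phi\in C_b(\T)$ by density, using the uniform mass bound).
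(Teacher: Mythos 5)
Your proposal is correct and follows essentially the same route as the paper: the Dynkin martingale for $\langle\pi_N(Z(t)),\phi\rangle$, whose drift is already linear in $\pi_N$, together with the vanishing quadratic variation obtained from a uniform bound on $\E[z_i(s)z_{i+1}(s)]$ via duality with a two-particle dual process (the SIP) and the bounded initial data; the tightness, identification, and uniqueness steps you spell out are exactly the standard arguments the paper delegates to Kipnis--Landim. The only caveat is that the second-moment bound should be run through the dual-process (duality-polynomial) representation, whose sup-norm control is uniform in the accelerated $N^2$ time scale, rather than a crude Gronwall/Lyapunov estimate on the closed correlation ODEs in the plain variables, which would degrade under the acceleration---this is precisely the paper's SIP-duality step.
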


A {sketch of proof} is as follows. If $\phi$ is a smooth function on $\T$,
we have from the definition of the  generator of Markov processes
\begin{equation*}
\frac{d}{dt}  \E^{\mathrm{BEP}(m)}_N
\big[  \langle\pi_N(Z(t)),\phi\rangle\big]
= \E^{\mathrm{BEP}(m)}_N\big( L_N^{\mathrm{BEP}(m)} \langle\pi_N(Z(t)),\phi \rangle\big).
\end{equation*}
From the definition of empirical energy we have
\begin{equation*}
\langle \pi_{N}(Z(t)) , \phi \rangle = \frac{1}{N} \sum_{i=1}^{N} z_i(t)  \phi_i,
\end{equation*}
where again we write $\phi_i:=\phi(i/N)$.
For fixed $i$, let $f_i$ be the coordinate function $f_i(Z) := z_i$.
Then
\begin{align*}
L^{\mathrm{BEP}(m)}f_i&=(L^{\mathrm{BEP}(m)}_{i-1,i}+L^{\mathrm{BEP}(m)}_{i,i+1})z_i
=\Bigl[2z_{i-1} z_{i}
\left(\partial_{i-1}-\partial_{i}\right)^2 -m (z_{i-1}-z_{i})\left(\partial_{i-1}-\partial_{i}\right)\\
&\quad \phantom{(L_{i-1,i}+L_{i,i+1})z_i
=}{}+
2z_i z_{i+1}
\left(\partial_i-\partial_{i+1}\right)^2 -m (z_i-z_{i+1})\left(\partial_i-\partial_{i+1}\right)\Bigr]z_i\\
&= m\bigl[ z_{i-1}-2 z_i+z_{i+1} \bigr],
\end{align*}
so that
\begin{equation*}
 L_N^{\mathrm{BEP}(m)} \langle\pi_N(Z(t)),\phi \rangle = N \sum_{i=1}^{N} (L^{\mathrm{BEP}(m)} f_i)(Z(t))  \phi_i
 =mN \sum_{i=1}^{N} \phi_i \big[ z_{i-1}-2 z_i+z_{i+1} \big].
\end{equation*}
We now apply summation by parts, using the periodicity of the lattice, to obtain
\begin{align*}
 L^{\mathrm{BEP}(m)}_N \langle\pi_N(Z(t)),\phi \rangle
 &=mN \sum_{i=1}^{N}  z_i \big[ \phi_{i-1}-2 \phi_i+\phi_{i+1} \big]\\
 &=\frac{m}{N} \sum_{i=1}^{N}  z_i \Delta_N \phi_i
\approx  m \langle \pi_N(Z(t)),\partial_{xx} \phi\rangle,
\end{align*}
where  $(\Delta_N\phi)_i:=N^2 \big[  \phi_{i-1}-2 \phi_i+ \phi_{i+1}\big] $ is a discrete approximation of the Laplacian and
in the last step we used the smoothness of $\phi$. Then
\[
\frac{d}{dt} \langle \pi_N(Z(t)),\phi\rangle \approx m \langle \pi_N(Z(t)),\partial_{xx} \phi\rangle,
\]
which is a weak formulation of~\eqref{eq:limitBEP}.

\bigskip

The rigorous proof of Theorem~\ref{th:hydlimit} follows from standard arguments for gradient processes that can be found in e.g.~\onlinecite[Ch.~4]{KipnisLandim99}. These revolve around the martingale
\begin{eqnarray}\label{marti}
M^N_t
&:=&
\langle\pi_N (Z(t)), \phi\rangle -\langle\pi_N (Z_N^0), \phi\rangle -\int_0^t  L_N^{\mathrm{BEP}(m)}\langle \pi_N(\cdot),\phi\rangle(Z(s)) \ ds
\nonumber\\
&=&
\langle\pi_N (Z(t)), \phi\rangle -\langle\pi_N (Z_N^0), \phi\rangle
 -\int_0^t  \langle \pi_N(Z(s)),m\partial_{xx} \phi\rangle\ ds+ o(1)
\end{eqnarray}
where $o(1)$ converges to zero uniformly as $N\to\infty$.

The essential property of $M^N$ that provides both compactness and convergence is the vanishing of the quadratic variation, the content of the following lemma.
\begin{lemma}
The quadratic variation of $M^N$ satisfies
\be\label{ooops}
\forall t>0, \qquad \lim_{N\to\infty}\E^{\mathrm{BEP}(m)}_N[M^N]_t=0.
\ee
\end{lemma}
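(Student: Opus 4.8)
The plan is to compute the quadratic variation of $M^N$ explicitly from the generator, and then show it converges to zero because of the extra factor $N^2$ in the accelerated generator $L_N^{\mathrm{BEP}(m)} = N^2 L^{\mathrm{BEP}(m)}$ is outmatched by two factors of $1/N$ coming from the empirical measure and from the discrete gradient of $\phi$. Recall the general formula for the quadratic variation of a Dynkin martingale: if $g(Z) := \langle \pi_N(Z),\phi\rangle = \tfrac1N\sum_i z_i\phi_i$, then
\be
\E^{\mathrm{BEP}(m)}_N [M^N]_t = \E^{\mathrm{BEP}(m)}_N \int_0^t \Gamma_N\bigl(g,g\bigr)(Z(s))\, ds,
\ee
where $\Gamma_N(g,g) := L_N^{\mathrm{BEP}(m)}(g^2) - 2 g\, L_N^{\mathrm{BEP}(m)} g = N^2\bigl(L^{\mathrm{BEP}(m)}(g^2) - 2 g L^{\mathrm{BEP}(m)} g\bigr)$ is the carré-du-champ operator. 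So the first step is to derive a clean expression for the carré du champ of the single BEP generator.

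Second, I would carry out this computation bond by bond. Since $L^{\mathrm{BEP}(m)} = \sum_i L^{\mathrm{BEP}(m)}_{i,i+1}$ and each $L^{\mathrm{BEP}(m)}_{i,i+1} = 2z_iz_{i+1}(\partial_i-\partial_{i+1})^2 - m(z_i-z_{i+1})(\partial_i-\partial_{i+1})$ is a one-dimensional diffusion operator in the variable $(z_i-z_{i+1})$, the carré du champ of the full generator is the sum of the single-bond carrés du champ, and for a diffusion operator the carré du champ only sees the second-order part: $\Gamma_{i,i+1}(g,g) = 2z_iz_{i+1}\bigl((\partial_i-\partial_{i+1})g\bigr)^2$. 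Applying this with $g = \tfrac1N\sum_k z_k\phi_k$ gives $(\partial_i-\partial_{i+1})g = \tfrac1N(\phi_i-\phi_{i+1})$, so
\be
\Gamma_N(g,g)(Z) = N^2 \sum_{i=1}^N 2 z_i z_{i+1}\,\frac1{N^2}(\phi_i-\phi_{i+1})^2 = 2\sum_{i=1}^N z_i z_{i+1}(\phi_i - \phi_{i+1})^2.
\ee

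Third, I would estimate this sum. Because $\phi$ is smooth (we may take $\phi \in C^1(\T)$ by density), $|\phi_i - \phi_{i+1}| \le \|\phi'\|_\infty / N$, hence $(\phi_i-\phi_{i+1})^2 \le \|\phi'\|_\infty^2/N^2$, and therefore
\be
0 \le \Gamma_N(g,g)(Z) \le \frac{2\|\phi'\|_\infty^2}{N^2}\sum_{i=1}^N z_i z_{i+1} \le \frac{2\|\phi'\|_\infty^2}{N^2}\sum_{i=1}^N \frac{z_i^2 + z_{i+1}^2}{2} = \frac{2\|\phi'\|_\infty^2}{N^2}\sum_{i=1}^N z_i^2.
\ee
Taking expectations and integrating in time, $\E^{\mathrm{BEP}(m)}_N[M^N]_t \le \dfrac{2\|\phi'\|_\infty^2}{N^2}\displaystyle\int_0^t \E^{\mathrm{BEP}(m)}_N\Bigl[\sum_{i=1}^N z_i(s)^2\Bigr]\, ds$. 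It therefore remains to control $\E^{\mathrm{BEP}(m)}_N\bigl[\sum_i z_i(s)^2\bigr]$, and the claim follows as soon as this quantity is $O(N)$ uniformly on $[0,t]$ — which then beats the $1/N^2$ and leaves an $O(1/N)$ bound.

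The main obstacle, then, is this last moment bound: showing that $\E^{\mathrm{BEP}(m)}_N\bigl[\sum_i z_i(s)^2\bigr] \le C N$ for $s\in[0,T]$. I would obtain it by a Gronwall argument on $\psi_N(s) := \E^{\mathrm{BEP}(m)}_N\bigl[\sum_i z_i(s)^2\bigr]$: using Dynkin's formula, $\tfrac{d}{ds}\psi_N = \E^{\mathrm{BEP}(m)}_N\bigl[ L_N^{\mathrm{BEP}(m)} \sum_i z_i^2\bigr]$, and a direct computation of $L^{\mathrm{BEP}(m)}_{i,i+1}(z_i^2+z_{i+1}^2)$ — which is a quadratic polynomial in $z_i, z_{i+1}$ — shows that $L^{\mathrm{BEP}(m)}\sum_i z_i^2$ is bounded above by a constant (depending on $m$) times $\sum_i z_i^2$ plus lower-order terms that are themselves controlled by $\sum_i z_i^2$ and the conserved total energy $\sum_i z_i$; the $N^2$ acceleration then gives $\tfrac{d}{ds}\psi_N \le C N^2$-type growth, which naively is too weak, so the key is to notice that the second-order (diffusive) part of $L^{\mathrm{BEP}(m)}\sum_i z_i^2$ telescopes under summation by parts and contributes only $O(N^2/N^2)\cdot(\text{energy})$-type terms after the rescaling — more precisely, one checks that $\sum_i L^{\mathrm{BEP}(m)}_{i,i+1}(z_i^2 + z_{i+1}^2)$ has no term that grows faster than linearly in the $z$'s once the conservation law is used, giving $\tfrac{d}{ds}\psi_N \le C_1 \psi_N + C_2 N$ (the $C_2 N$ from the conserved-energy terms and the uniform boundedness of the initial data $z^0_{N,i}$), whence $\psi_N(s) \le (\psi_N(0) + C_2 N T/C_1)e^{C_1 T} \le C(T) N$ since $\psi_N(0) = \sum_i (z^0_{N,i})^2 \le (\sup_{N,i} z^0_{N,i})^2\, N = O(N)$ by the uniform boundedness assumption on the initial condition. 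Substituting back completes the proof. The delicate bookkeeping is precisely in verifying that the rescaled drift does not produce an uncontrolled $N^2$ factor — this is where the gradient (conservative) structure of the process is used, and it is the step I would write out most carefully.
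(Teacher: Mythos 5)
Your first two steps---writing $\E^{\mathrm{BEP}(m)}_N[M^N]_t$ as the time-integrated carr\'e du champ of $g_N=\frac1N\sum_i z_i\phi_i$ and computing it bond by bond, so that everything reduces to a uniform-in-$N$, uniform-in-time bound on second moments of the $z_i$---are exactly the paper's route (up to an inconsequential constant: the per-bond carr\'e du champ is $4z_iz_{i+1}\bigl((\partial_i-\partial_{i+1})g\bigr)^2$, not $2z_iz_{i+1}\bigl(\cdots\bigr)^2$). The genuine gap is in your proof of the moment bound $\E^{\mathrm{BEP}(m)}_N\bigl[\sum_i z_i(s)^2\bigr]=O(N)$. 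Your Gronwall argument hinges on the claim that the second-order part of the generator applied to $\sum_i z_i^2$ ``telescopes'' and leaves only terms of at most linear growth in the $z$'s. It does not: a direct computation gives
\begin{equation*}
L^{\mathrm{BEP}(m)}_{i,i+1}\bigl(z_i^2+z_{i+1}^2\bigr)=8\,z_iz_{i+1}-2m\,(z_i-z_{i+1})^2 ,
\end{equation*}
so the diffusive part contributes the positive, genuinely quadratic term $8\sum_i z_iz_{i+1}$; only linear functions such as $\sum_i z_i$ telescope, and $z_iz_{i+1}$ is not controlled by $(z_i-z_{i+1})^2$ (take a constant configuration). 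Hence the best crude differential inequality available is $\frac{d}{ds}\psi_N\leq 8N^2\psi_N$, giving $\psi_N(s)\leq\psi_N(0)e^{8N^2s}$, which is useless after the $N^2$ acceleration; an inequality of the form $\frac{d}{ds}\psi_N\leq C_1\psi_N+C_2N$ with $N$-independent $C_1$ is simply not what the computation yields. The bound you need is true, but it rests on a cancellation that AM--GM-type estimates destroy: under the invariant Gamma measure $\E\bigl[8z_iz_{i+1}-2m(z_i-z_{i+1})^2\bigr]=0$ exactly, so one must control the actual time-evolved correlations, not merely their crude size.

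This is precisely where the paper's proof invokes a tool absent from your argument: duality of the BEP($m$) with the Symmetric Inclusion Process. Since $z_iz_{i+1}$ equals, up to a constant $K_m$, the duality polynomial $D(\delta_i+\delta_{i+1},z)$, one has $\E^{\mathrm{BEP}(m)}_N\bigl[z_i(s)z_{i+1}(s)\bigr]=K_m\,\E^{\mathrm{SIP}}_{i,i+1}D(\delta_{X_s}+\delta_{Y_s},z^0)$, a second-order polynomial in the initial data, hence bounded by $B_m(C+C^2)$ uniformly in $N$ and $s\in[0,t]$ thanks to the assumed uniform bound on $z^0_{N,i}$. Equivalently, the second moments of the BEP($m$) evolve by a closed linear (two-dual-particle) dynamics, which is what makes them controllable; a scalar Gronwall bound on $\sum_i z_i^2$ cannot see this structure. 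If you replace your moment-bound step by such a duality estimate (applied either to $z_iz_{i+1}$ directly, as in the paper, or to $z_i^2$), the rest of your argument goes through and yields $\E^{\mathrm{BEP}(m)}_N[M^N]_t=O(1/N)$.
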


\begin{proof}
The quadratic variation of $M^N_t$ is given by
\[
[ M^N]_t = \int_0^t \left(L_N^{\mathrm{BEP}(m)}(g_N^2)(Z(s))- 2g_N(Z(s)) L_N^{\mathrm{BEP}(m)}(g_N)(Z(s))\right)\, ds
\]
where
\[
g_N(Z(t)) = \frac{1}{N} \sum_{i=1}^{N} z_i(t)  \phi_i.
\]
Explicit computation then gives
\[
\langle M^N_t\rangle = \frac{4}{N^2} \sum_{i=1}^N \phi_i^2\int_0^t   u_i (Z(s)) ds
\]
with
\[
u_i(z) =  z_i z_{i+1} + z_{i-1}z_i.
\]

We use the assumption that the initial condition is bounded, i.e.,  that
$\sup_{i} z_{N,i}^0 <C$.
Then we have
\[
\E^{\mathrm{BEP}(m)}_N\biggl(\sum_{i=1}^N \int_0^t  \phi_i^2 u_i (Z(s)) \,ds\biggr)=\sum_{i=1}^N \phi_i^2 \int_0^t   \E^{\mathrm{BEP}(m)}_N(u_i (Z(s)))\, ds.
\]
Therefore, in order to obtain \eqref{ooops}, it suffices to see that
\[
\E^{\mathrm{BEP}(m)}_N (z_i(s) z_{i+1}(s))
\]
is bounded uniformly in $N$ and in $0\leq s\leq t$. The expression $z_i z_{i+1}$ is equal to $K_m D(\delta_i + \delta_{i+1}, z)$, where
$D(\xi,z)$ are the duality polynomials introduced in \onlinecite{GiardinaKurchanRedig07,GiardinaRedigVafayi10}, {and where
$K_m$ is a constant only depending on $m$}. Therefore, by duality of the BEP$(m)$ and the Symmetric Inclusion Process (SIP, see \onlinecite{GiardinaKurchanRedig07,GiardinaRedigVafayi10} for more details on this process and the duality), we have
\[
\E^{\mathrm{BEP}(m)}_N(z_i(s) z_{i+1}(s))= K_m\E_{i,i+1}^{\mathrm{SIP}}D(\delta_{X_t}+\delta_{Y_t}, z)\leq B_m (C+C^2),
\]
where $\E_{i,i+1}^{\mathrm{SIP}}$ denotes expectation in the SIP starting with a single particle at both the sites
$i$ and $i+1$, and where in the last step we used that $D(\delta_{X_t}+\delta_{Y_t}, z)$ is a second order polynomial in
the variables $z_i$, with coefficients only depending on $m$.
\end{proof}

From~\eqref{ooops}, using Doob's inequality, we obtain that for all test functions $\phi$
\begin{eqnarray*}
&&\sup_{0\leq t\leq T}\left|\langle\pi_N (Z(t)), \phi\rangle -\langle\pi_N (Z_N^0), \phi\rangle
 -\int_0^t  \langle \pi_N(Z(s)),m\,\partial_{xx} \phi\rangle\ ds\right|
\end{eqnarray*}
converges to zero in probability, which is exactly the statement of the hydrodynamic limit in the path space
version.

\section{The Weakly Asymmetric BEP($m$)}

\label{sec:WABEP}
For the calculation of the nonequilibrium large deviations of the BEP($m$) we need to calculate the hydrodynamic limit of a weakly-perturbed version of the BEP($m$), the Weakly Asymmetric BEP($m$) or WABEP($m$):
\begin{equation*}
L_N^{\mathrm{WABEP}(m)}:=L_N^{\mathrm{BEP}(m)}+\Gamma.
\end{equation*}
There are various ways to introduce this asymmetry. From group-theoretic considerations in~\onlinecite{GiardinaKurchanRedig07,GiardinaRedigVafayi10} we choose $\Gamma$ to be
\begin{equation*}
\Gamma=-N^2\sum_{i=1}^{N-1} E_i z_i z_{i+1} \left( \partial_i - \partial_{i+1} \right),
\end{equation*}
where the $E_i$ are to scale as $1/N$ as $N\to\infty$. One way to achieve this, which will also simplify formulas later, is to choose
\begin{equation*}
E_i:=H_{i+1}-H_i,
\end{equation*}
where $H_i:= H_i(t) = H(t,i/N)$ and $H$ is a fixed smooth function of $t$ and $x$. For large $N$, $E_i(t) \approx N^{-1}\partial_xH(t,i/N)$.

Following similar steps as in the previous section for BEP($m$) but now with the generator
\begin{equation*}
L^{\mathrm{WABEP}(m)}_N=L_N^{\mathrm{BEP}(m)}+ \Gamma,
\end{equation*}
we have
\begin{equation*}
 \Gamma \langle\pi_N(Z(t)),\phi \rangle
 =N \sum_{i=1}^{N}  \phi_i \Gamma z_i,
\end{equation*}
and from the choice of $\Gamma$ above we obtain
\begin{align*}
\Gamma z_i &=-N^2\Bigl[ E_{i-1} z_{i-1} z_{i} \left( \partial_{i-1} - \partial_{i} \right) + E_i z_i z_{i+1} \left( \partial_i - \partial_{i+1} \right)\Bigr] z_i\\
&=N^2\bigl[E_{i-1} z_{i-1} z_{i}- E_i z_i z_{i+1}\bigr],
\end{align*}
so that
\begin{equation*}
\Gamma \langle\pi_N(Z(t)),\phi \rangle
 =N \sum_{i=1}^{N}  \phi_i [ E_{i-1} z_{i-1} z_{i}- E_i z_i z_{i+1}].
\end{equation*}

Again applying summation by parts, we get
\begin{equation*}
\Gamma \langle\pi_N(Z(t)),\phi \rangle  =N \sum_{i=1}^{N}  (\phi_{i+1}-\phi_i) E_i z_i z_{i+1}.
\end{equation*}
Combining this with the computation in the previous section we find
\begin{align*}
L^{\mathrm{WABEP}(m)}_N \langle\pi_N(Z(t)),\phi \rangle
&=mN \sum_{i=1}^{N}  z_i \big[ \phi_{i-1}+\phi_{i+1}-2 \phi_i \big] + N \sum_{i=1}^{N}  (\phi_{i+1}-\phi_i) E_i z_i z_{i+1}
\\
& =\frac{m}{N} \sum_{i=1}^{N}  z_i \Delta_N \phi_i + \frac{1}{N} \sum_{i=1}^{N}  z_i z_{i+1} \nabla_N H_i \nabla_N \phi_i
\\
&\approx  m\langle \pi_N(Z(t)),\partial_{xx} \phi\rangle + \langle \left[ \pi_N ^2 (Z(t)) \right]  \partial_x H ,\partial_x \phi\rangle,
\end{align*}
where we write $\nabla_N H_i = N(H_{i+1}-H_i)$ for the discrete derivative and we define
\begin{equation*}
 \pi_N^{(2)}(Z(t))  :=\frac{1}{N} \sum_{i=1}^{N}  z_i z_{i+1} \delta_{{i}/{N}} .
\end{equation*}
Note that $\pi_N^{(2)}$ is not simply the square of $\pi_N$; however,
in the limit $N\to\infty$ we expect that $\wlim_{N\to\infty} \pi_N^{(2)}(Z) = \bigl(\wlim_{N\to\infty}\pi_N(Z)\bigr)^2$, as we will see in the next section. This property is crucial in order to obtain a closed equation for $\rho(t,x)$, and is related to the fact that in the hydrodynamic time scale $N^2$, local equilibrium installs.

\section{The Replacement Lemma}
\label{sec:replacement}
We now discuss the Replacement Lemma for the WABEP$(m)$ in more detail.
The term
\be\label{problem}
\frac{1}{N} \sum_{i=1}^{N}  z_i z_{i+1} \nabla_N H_i \nabla_N \phi_i
\ee
is not a function of $\pi_N(Z)$, and in order to close the equation
we need to replace this term by a function of $\pi_N(Z)$. This is a classical
problem in the theory of hydrodynamic limits, and various methods have been devised to tackle it~\cite{GPV,Yau91,GrunewaldOttoVillaniWestdickenberg09}. In the context of a Ginzburg-Landau model Guo, Papanicolaou, and Varadhan \cite{GPV,KOV} show that a term of the form~\eqref{problem} can
be replaced by a function of $\pi_N$. More precisely, for $\psi:\T\to\R$ smooth and any $\epsi>0$,

\beq\label{approx}
\frac{1}{N} \sum_{i=1}^{N}  z_i (t) z_{i+1} (t) \psi_i
&= &
\frac{1}{N} \sum_{i=1}^{N}  \frac{1}{2\epsi N+1}\sum_{|j-i|\leq \epsi N}z_j (t) z_{j+1} (t) \psi_i + O(\epsi)
\nonumber\\
&= &
\frac{1}{N} \sum_{i=1}^{N}  \nu_{\frac{1}{2\epsi N+1}\sum_{|j-i|\leq \epsi N}z_j (t)} (z_0z_1) \psi_i + O(\epsi)
\nonumber\\
&= &
\frac{1}{N} \sum_{i=1}^{N}  \biggl(\frac{1}{2\epsi N+1}\sum_{|j-i|\leq \epsi N}z_j (t)\biggr)^2 \psi_i + O(\epsi).
\eeq

The first equality follows from the smoothness of $\psi$.
In the second step $\nu_\rho$ is the product measure \eqref{def:nu} on the two-site state space with variables $z_0$ and $z_1$, with $\E_{\nu_\rho} (z_0) = \E_{\nu_\rho}(z_1) =\rho$,  $\nu_\rho(f)=\int f d\nu_\rho$,
and
hence $\nu_\rho (z_0z_1)= \rho^2$. Given this definition, the third equality follows from these properties of $\nu_\rho$. The main approximation step is therefore the second equality.
Once one has justified this approximation, the expression in the last line
is again a function of the empirical energy profile.

\drop{Indeed denote
by $\tilde{f}(\rho)=\nu_\rho(f)$. Next denote $\ell_\epsi (x) =\frac1{2\epsi} 1_{[-\epsi, \epsi]} (x)$, then
\[
\frac1{2N\epsi} \sum_{|j|\leq \epsi N} z_j = \langle\pi_N(Z), \ell_\epsi\rangle,
\]
and therefore, choosing $f(z)= z_0z_1$ and correspondingly $\tilde{f}(\rho)=\rho^2$ we have
\begin{eqnarray*}
&&\frac{1}{N} \sum_{i=1}^{N}  \nu_{\frac{1}{2\epsi N+1}\sum_{|j-i|\leq \epsi N}z_j (N^2 t)} (z_0z_1) \psi_i
\\
&&
\frac{1}{N} \sum_{i=1}^{N}  \theta_i \tilde{f} (\langle\pi^N(Z(N^2 t)), \ell_\epsi\rangle) \psi_i
\\
&&\approx\int \psi(x) \tilde{f}(\rho(t,x)) dx
\\
&&\approx\int \psi(x) (\rho(t,x))^2 dx
\end{eqnarray*}
}

Intuitively, the approximation in the second equality arises from the local-equilibrium property.
However, since one wants to apply it in the context of large deviations, one
has to show that the approximation is superexponentially good.
This means that if we denote for given $\delta>0$
the event
\begin{equation*}
\left|\int_0^T\left(\frac{1}{N} \sum_{i=1}^{N}  \frac{1}{2\epsi N+1}\sum_{|j-i|\leq \epsi N}z_j (N^2t) z_{j+1} (N^2t) \psi_i
-
\frac{1}{N} \sum_{i=1}^{N}  \nu_{\frac{1}{2\epsi N+1}\sum_{|j-i|\leq \epsi N}z_j (N^2t)} (z_0z_1) \psi_i \right)ds\right| >\delta
\end{equation*}
by $A^N_\delta$, then
\be\label{super}
\limsup_{N\to\infty}\frac1N\log \pee^{\mathrm{BEP}(m)}_N (A^N_\delta)=-\infty.
\ee
For the Ginzburg-Landau model of~\onlinecite{GPV} the  superexponential estimate and   the dynamical large-deviation principle was proven in \onlinecite{DV}.
See \onlinecite{KipnisLandim99} for more details and a proof in the context of  gradient zero range processes. To apply the same method to the WABEP there are two additional technical difficulties, similar to the case of the KMP model~\cite{BertiniGabrielliLebowitz05}, namely first that the
equilibrium invariant measures have only exponential tails,  and second that the mobility is equal to $\rho^2$ and thus is unbounded.

\medskip

Assuming that the Replacement Lemma holds, we can proceed with the hydrodynamic limit of
the WABEP$(m)$, and we obtain from the computation in the two previous sections
\begin{equation*}
L^{\mathrm{WABEP}(m)}_N \langle\pi_N(Z(t)),\phi \rangle
\approx  m\int \rho(t,x) \partial_{xx} \phi (x) +  \int \rho(t,x)^2 \partial_x H(t,x) \partial_x \phi(x),
\end{equation*}
which leads to the hydrodynamic limit given by the equation
\be\label{wabeplim}
\partial_t \rho= m\,\partial_{xx} \rho- \partial_x\left(\rho^2\partial_x H\right).
\ee

\section{Large deviations from the hydrodynamic limit}
\label{sec:LDP}
In the language of the general large-deviation theory developed by
\onlinecite{BertiniDe-SoleGabrielliJona-LasinioLandim09}, equation~\eqref{eq:limitBEP} is a model with diffusion constant $D=m$
and
susceptibility (mobility) $\chi(\rho)= \rho^2$. Assuming the validity of  the superexponential Replacement Lemma,
and as a consequence the limiting equation for the WABEP$(m)$, the road to large deviations is
well known.
We describe this procedure here somewhat intuitively.
The hydrodynamic limit for the BEP$(m)$ predicts that
$\pi_N(Z(t))$ evolves ``typically'' as $\rho(t,x) dx$ where
$\rho(t,x)$ solves the diffusion equation~\eqref{eq:limitBEP}. This behavior
is a manifestation of the law of large numbers and therefore one expects corresponding
exponentially small (in $N$) large-deviation probabilities, i.e.,
we expect, in the sense of large deviations
\be
\pee_N^{\mathbb{BEP}(m)} (\pi_N(Z)|_{[0,T]} \approx \gamma dx|_{[0,T]}) \sim e^{-N I(\gamma)}.
\ee
For the rate function $I$ one expects the Lagrangian form
\[
I(\gamma)= \int_0^T\int_\T \loc(\gamma(t,x),\dot\gamma(t,x)) \,dx dt.
\]
In order to find $\loc$, one modifies the BEP  by adding a weak time-dependent asymmetry $H(t,x)$ as we described above,
in such a way that the hydrodynamic equation
\eqref{wabeplim}
has the trajectory $\gamma$ as a solution, i.e., we look for $H= H(t,x)$ such that
\be\label{wabeplim2}
\partial_t \gamma(t,x)= m\partial_{xx} (\gamma(t,x))- \partial_x\left(\gamma(t,x)^2\partial_x H(t,x)\right).
\ee
Here $H(t,x)$ is the unknown function, i.e., this is a linear Cauchy problem in $H$ for given $\gamma$.
The solution $H$ can be written formally as
\be
\partial_x H(t,x)= \frac{1}{\gamma(t,x)^2}\partial_x^{-1} \left(\partial_t \gamma(t,x)- m\partial_{xx} (\gamma(t,x))\right)
\ee
With this choice of $H$ the event $\pi_N(Z) \approx \gamma$
becomes ``typical'', i.e., of probability close to one as $N\to\infty$.
Therefore, one can use the  ``Cram\'er'' tilting method:
\begin{align*}
\log\pee^{\mathrm{BEP}(m)}_N (\pi_N(Z) \approx \gamma dx)
&=\log\int \Indicator(\pi_N \approx \gamma dx)\frac{d\pee^{\mathrm{BEP}(m)}_N}{d\pee^{\mathrm{WABEP}(m)}_{N, H}} \;d\pee^{\mathrm{WABEP}(m)}_{N,H}
\\
&
\geq
\int  \Indicator(\pi_N \approx \gamma dx) \log\frac{d\pee^{\mathrm{BEP}(m)}_N}{d\pee^{\mathrm{WABEP}(m)}_{N, H}}\;d\pee^{\mathrm{WABEP}(m)}_{N,H}
\end{align*}
where $\Indicator$ is the indicator function.
So one has to compute
\[
\log\frac{d\pee^{\mathrm{BEP}(m)}_N}{d\pee^{\mathrm{WABEP}(m)}_{N, H}}.
\]
The Radon-Nikodym derivative ${d\pee^{\mathrm{BEP}(m)}_N}/{d\pee^{\mathrm{WABEP}(m)}_{N, H}}$ is given by the Girsanov formula (see the appendix):
\be\label{girsbol}
\frac1N \log\frac{d\pee^{\mathrm{BEP}(m)}_N}{d\pee^{\mathrm{WABEP}(m)}_{N, H}}= -\frac18 \int_0^T \sum_{i} E_i(t) z_i z_{i+1} + \mathcal M
\ee
with $E_i(t) =\partial_x H(i/N, t)$. Here $\mathcal M$ is a martingale under $\pee^{\mathrm{WABEP}(m)}_{N,H}$. Using the superexponential estimate \eqref{super} this is rewritten
in terms of the density field, which as in the derivation of the hydrodynamic limit leads to
\[
\frac{1}{N}\int   \Indicator(\pi_N \approx \gamma dx) \log\frac{d\pee^{\mathrm{BEP}(m)}_N}{d\pee^{\mathrm{WABEP}(m)}_{N, H}} \;d\pee^{\mathrm{WABEP}(m)}_{N, H}
\approx
-\frac18 \int_0^T\!\! \int   (\partial_x H(x,t))^2 \gamma(t,x)^2 \, dxdt.
\]

Therefore the large deviations of the time courses of $\pi_N(Z)$ are described by
\[
\mathbb P_N^{\mathrm{BEP}(m)} \bigl( \pi_N(Z)|_{[0,T]}| \approx \rho|_{[0,T]}\bigr)
\sim \exp\bigl\{ -NI^{\mathrm{BEP}(m)}_{\rho^0}(\rho)\bigr\},
\]
where the rate function $I^{\mathrm{BEP}(m)}_{\rho^0}$ is
\begin{equation*}
\label{def:IBEPm}
I^{\mathrm{BEP}(m)}_{\rho^0}(\rho) =\begin{cases}
\displaystyle \frac18\int_0^T\!\!\!\int_\T  \rho(t,x)^2 (\partial_x H (t,x))^2\, dxdt,\quad
& \text{if $\rho(0,\cdot) = \rho^0$ and  $H$ is given by~\eqref{wabeplim}},\\[2\jot]
+\infty &\text{otherwise}.
\end{cases}
\end{equation*}
Note that $I^{\mathrm{BEP}(m)}_{\rho^0}$ is a measure of the degree to which $\rho$ does \emph{not} satisfy the equation $\partial_t\rho = m\partial_{xx}\rho$.

\section{The GBEP($a$) and KMP processes}
\label{sec:otherprocesses}
The calculations above were done for the Brownian Energy Process with parameter $m$. For the GBEP($a$) and KMP processes the arguments are very similar. In fact, the invariant measures of the GBEP($a$) are the same as for BEP(1), and for the KMP the same as for BEP(2). This also implies that the large-deviation rate functionals for the two processes are $S_{\rho_0,1}$ and $S_{\rho_0,2}$.

Therefore  the three processes have equilibrium rate functionals $S_{\rho_0,m}$, $S_{\rho_0,1}$, and $S_{\rho_0,2}$ that are essentially the same; they only differ by a multiplicative constant.

\medskip

For the GBEP($a$), the addition of the factor $a$ in~\eqref{def:GBMP} and~\eqref{def:GBEP} modifies the hydrodynamic limit, which now becomes
\[
\partial_t \rho = \partial_{x}(a^2(\rho)\partial_x \rho).
\]
For the large deviations of the time courses of the empirical measure $\pi_N$, similar arguments as for BEP($m$) give the expression
\begin{equation}
\label{def:IGBEP}
I^{\mathrm{GBEP}(a)}_{\rho^0}(\rho) = \frac18 \int_0^T
\!\!\int_\T  \rho^2 a^2(\rho) (\partial_x H)^2,
\qquad \partial_t \rho - \partial_{x}(a^2(\rho)\partial_x \rho) = \partial_x\bigl(\rho^2a^2(\rho)\partial_x H\bigr),
\end{equation}
provided that $\rho(0,\cdot) = \rho^0$.

For the KMP, the hydrodynamic limit and the large deviations were found in~\onlinecite{BertiniGabrielliLebowitz05} to be
\[
\partial_t \rho = \partial_{xx} \rho
\]
and
\begin{equation}
\label{def:IKMP}
I^{\mathrm{KMP}}_{\rho^0}(\rho) = \frac14\int_0^T
\!\!\int_\T  \rho^2 (\partial_x H)^2,
\qquad \partial_t \rho - \partial_{xx} \rho = \partial_x(\rho^2\partial_x H),
\end{equation}
again provided that $\rho(0,\cdot) = \rho^0$.

\section{Gradient-flow structures}
\label{sec:GF}

We now describe how these results identify gradient-flow structures for the limiting equations. Above we have derived three large-deviation rate functionals $I^{\mathrm{BEP}(m)}_{\rho^0}$, $I^{\mathrm{GBEP}(a)}_{\rho^0}$, and $I^{\mathrm{KMP}}_{\rho^0}$, which each can be written in the form
\begin{equation}
\label{def:I}
I(\rho) := \frac12 \int_0^T \left\|\partial_t \rho-K_\rho \frac{\delta E}{\delta \rho}\right\|_{K_\rho^{-1}}^2\, dt,
\qquad \text{provided }\rho(0,\cdot) =\rho^0,
\end{equation}
where
\begin{equation}
\label{def:EK}
E(\rho) := -\gamma \int_\T \log \rho \qquad\text{and}\qquad
K_\rho\xi := -\partial_x \bigl(\alpha(\rho)\partial_x\xi\bigr).
\end{equation}
Here $\gamma$ and $\alpha$ are process-dependent. We list their values and the corresponding limiting equation:
\begin{subequations}
\label{def:GFstructures}
\begin{alignat}4
&\mathrm{BEP}(m): &\quad &\gamma = m/4 &\quad \text{and }&\alpha(\rho) =4\rho^2;&& \partial_t \rho = m\partial_{xx}\rho;\\
&\mathrm{GBEP}(a): &&\gamma = 1/4 & \text{and }& \alpha(\rho) = 4\rho^2a(\rho)^2;
 &\qquad& \partial_t \rho = \partial_{x}\bigl(a^2(\rho)\partial_x\rho\bigr);
 \label{GF-structure:GBEP}\\
&\mathrm{KMP}: &&\gamma =1/2&\text{and } & \alpha(\rho) = 2\rho^2;&&
\partial_t \rho = \partial_{xx}\rho.
\end{alignat}
\end{subequations}
Note that for each of the processes $E$ is equal to one-half of the corresponding equilibrium rate functional $S_{\rho_0,m}$, $S_{\rho_0,1}$, or $S_{\rho_0,2}$ up to an additive constant, since $\int\rho$ can be assumed constant during evolution.

The norm $\|\cdot\|_{K_\rho^{-1}}$ in~\eqref{def:I} is defined as the dual norm of
\[
\|\xi\|^2_{K_\rho} := \int_\T \xi K_\rho \xi,
\]
Here, for a norm $\| \|$ on a linear space $X$, the dual norm $\|\|_*$ on the dual space $X'$ is defined via
\[
\|f\|^2_* = \sup \frac{|\langle f,g\rangle|^2}{\|g\|^2}
\]
where $\langle f,g\rangle$ denotes the pairing between $X'$ and $X$.
It can be written in terms of a formal inverse $G_\rho = K_\rho^{-1}$ as
\[
\|s\|^2_{K_\rho^{-1}} = \|s\|^2_{G_\rho} := \int_\T s G_\rho s.
\]
To formally verify the claim above that $I^{\mathrm{BEP}(m)}_{\rho^0}$ can be written in the form~\eqref{def:I}, for instance, first note that $K_\rho \delta E/\delta \rho = m\partial_{xx}\rho$. Then, whenever $\partial_t \rho - m\partial_{xx} \rho = \partial_x\bigl(\rho^2\partial_x H\bigr)$,
\[
K_\rho H = -4 \bigl(\partial_t\rho - m\partial_{xx}\rho\bigr) \qquad\text{so that}\qquad
H = -4G_\rho \bigl(\partial_t\rho - m\partial_{xx}\rho\bigr).
\]
Therefore
\begin{align*}
I^{\mathrm{BEP}(m)}_{\rho^0}(\rho) &= \frac18 \int_0^T \int_\T \rho^2 (\partial_x H)^2
  = -\frac18 \int_0^T \int_\T H \partial_x (\rho^2 \partial_x H)\\
&= \frac1{32} \int_0^T \int_\T H K_\rho H =
  \frac12 \int_0^T\int_\T \bigl(\partial_t\rho - m\partial_{xx}\rho\bigr) G_\rho \bigl(\partial_t\rho - m\partial_{xx}\rho\bigr) = I(\rho).
\end{align*}

\medskip

The structure~\eqref{def:I} identifies a gradient-flow structure with state space $L_{\geq0}^1(\T)$, with driving functional $E$, and with Onsager operator $K_\rho$. This is the structure that we are looking for.

\section{Conclusions and discussion}
\label{sec:discussion}

In the Introduction we asked the question `what is the natural gradient-flow structure for the conduction of heat?' By considering three different microscopic stochastic processes that model heat conduction we identified, through their large deviations, three gradient structures for macroscopic heat equations. We now comment on our findings.

\subsection*{The driving functional $E$}
Each of the three gradient-flow structures~\eqref{def:GFstructures} is driven by a functional $E$, which  is the same, up to multiplicative and additive constants, as the large-deviation rate functional $S_{\rho,\cdot}$ of the equilibrium invariant measure of the process. Therefore, despite the differences in the \emph{dynamics} of the processes, the functional that \emph{drives} those dynamics is nearly the same for the three processes. This is of course a consequence of the similarity of the invariant measures themselves, and therefore it is interesting to understand these similarities.

Although it is difficult to pin down necessary and sufficient conditions leading to the invariant measures of these processes, one can identify some ingredients:
\begin{itemize}
\item The variable $p_i$ (or the variables $p_i^j$) represents a quantity on a \emph{fixed} lattice;
\item The evolution locally preserves the quantity $p_i^2+p_{i+1}^2$ (or $\sum_{j,k=1}^m (p_i^j)^2 + (p_{i+1}^k)^2$);
\end{itemize}
In the thermodynamic limit ($N\to\infty$), using the equivalence of ensembles results in an invariant 
product measure with Lebesgue density $\exp (-e_i/\theta)$,
at each site $i$ 
where $e_i = p_i^2 $ or $e_i = \sum_j (p_i^j)^2$. After transition to the variables $z_i$, this invariant measure transforms into the Gamma distribution $\nu_{N,\theta,m}$ defined in~\eqref{def:nu};  all Gamma distributions lead to the same form of $E$, with $-\gamma \log \rho$ as the important term under the integral.

Summarizing, the form of the driving functional $E$ is related to the occurrence of a conserved quadratic energy $p^2$ on a fixed lattice.

\medskip

There is also a classical thermodynamical argument that leads to a driving functional proportional to the logarithm of the temperature. We start from the postulate for simple closed homogeneous systems without volume change,
\def\dbar{{\mathchar'26\mkern-12mu d}}
\begin{equation}
\label{eq:Gibbs}
T dS = \dbar Q
\end{equation}
where $dS$ is the infinitesimal change in entropy $S$ while adding $\dbar Q$ of heat to a system at temperature $T$ (the barred $d$ is the usual way of indicating that $\dbar Q$ is not necessarily an exact differential (see e.g.~\onlinecite[Ch.~1]{Callen85})). Taking the example of BEP($m$), internal energy is proportional to temperature (see the relationship between $\theta$ and $\E_{\nu_{\theta,m}} [ z_i ]$ in~\eqref{def:rho0}): $U = mT/2$. Therefore $\dbar Q$ is equal to the change $dU$ in internal energy $U$, and \eqref{eq:Gibbs} reduces to
\[
dS = \frac m2\frac1U dU = \frac m2 d\log U.
\]
If we accept the interpretation of the thermodynamic entropy $S$ as (minus) the large-deviation rate functional of the equilibrium system, then this relation provides a separate argument why the driving functional should be proportional to the logarithm of the temperature.

\subsection*{The Onsager operator $K$}
While $S_{\rho,\cdot}$ and therefore $E$ is generated by the \emph{conservation} properties of the process, the rates in the generator determine the `diffusion constant' $\alpha(\rho)$ in $K_\rho$~\eqref{def:EK}. The example of the GBEP($a$)~\eqref{GF-structure:GBEP} shows that there is nearly full freedom here: $\alpha(\rho) = a^2(\rho)$ can be any positive function of $\rho$, and under certain conditions this coefficient can even vanish at certain values of $\rho$ (`degenerate diffusion'~\onlinecite{Peletier97TH}).

A special feature of the BEP($m$) and KMP models is that the mobility is quadratic ($\alpha(\rho) \sim \rho^2$) and the driving functional logarithmic; the two combine to provide a \emph{linear} heat equation, through $\partial_x (\rho^2 \partial_x(-1/\rho)) = \partial_{xx}\rho$. It is interesting to ask where this quadratic mobility comes from.

In the BEP($m$) the quadratic mobility can be traced back to two ingredients:
\begin{enumerate}
\item The stochastic force that lattice point $i+1$ exerts on its neighbor at $i$ scales as $p_{i+1}$ (see the first two terms in~\eqref{eq:sde-pi}), and therefore its effect in $L^{\mathrm{BMP}}$ as $p_{i+1}^2$~(see~\eqref{def:L-BMP}).
\item Since $dp_i^2/dt= 2p_i dp_i/dt$, the effect of this forcing on $z_i$ is multiplied by $p_{i}$ in the SDE~\eqref{eq:sde-zi} and by $p_{i}^2$ in $L^{\mathrm{BEP}(m)}$ (see~\eqref{def:L-BEPm}).
\end{enumerate}
These are the reasons for the mobility $z_iz_{i+1}$ that multiplies the second-order term in~\eqref{def:L-BEPm}.

For the KMP model, there is an additional explanation. Since transfer of energy happens by redistributing energy with a  uniform distribution, the expected transfer of energy scales linearly with the difference of the expected energies in the two lattice points. In formulas,
\[
L_{i,i+1}^{\mathrm{KMP}}z_i  = \int_0^1 \bigl[ s(z_i+z_{i+1}) -z_i\bigr] \, ds = \frac12 (z_i+z_{i+1}).
\]
This linear energy flux translates into a linear hydrodynamic limit.

\subsection*{Comparison with diffusion}
Comparing these systems with diffusion systems we observe some similarities and some differences.

\emph{Similarity 1:} Both concern the redistribution over time of a conserved quantity: energy for heat conduction and mass for diffusion. Consequently both have hydrodynamic limits in divergence form.

\emph{Difference 1:} The stationary rate functionals are different: $-\int\log \rho$ for heat conduction, versus $\int \rho\log\rho$ for diffusion. These differences reflect the different origins: $-\int\log \rho$ arises from a Gamma-distributed quantity \emph{at} each lattice point, while $\int\rho\log\rho$ arises from the \emph{moving-around} of masses, through Stirling's formula (see e.g.~\onlinecite[Sec.~4.2]{PeletierVarMod14TR}). 

\emph{Difference 2:} The mobilities are different: $\rho^2$ for KMP and BEP($m$), $\rho^2a^2(\rho)$ for GBEP($a$), whereas diffusion has mobility $\rho$. Again this reflects the different origins: as described above, the mobility $\rho^2$ for KMP and BEP($m$) arises from the combination of neighboring forces that scale as $p$ with energies that scale as $p^2$, while the mobility $\rho$ for diffusion arises from a simple counting argument, where the flux of particles is proportional to the number of particles.

In this context it is instructive also to compare to the Simple Symmetric Exclusion Process, the jump process for particles on a lattice in which particles jump left and right with rate $1$ whenever the destination site is empty. Here the mobility is $\rho(1-\rho)$, which can be constructed through the same counting argument as above (the factor $\rho$) mitigated by the probability of finding the neighboring site empty (the factor $(1-\rho)$).

\subsection*{No metric space}
When the mobility $\alpha(\rho)$ in $K_\rho$ is concave, a generalized Wasserstein-type metric can be constructed as an infimum along curves, following~\onlinecite{BenamouBrenier00}:
\begin{equation}
\label{def:d-BB}
d(\rho_0,\rho_1)^2 := \inf_{\rho,w}\left\{\int_0^1\int_\T \frac{w^2}{\alpha(\rho)}: \partial_t \rho = \partial_x w\text{ and } \rho|_{t=0,1} = \rho_{0,1}\right\}.
\end{equation}
The concaveness of $\alpha$ causes the function $(w,r)\mapsto w^2/\alpha(r)$ to be convex, rendering the integral above lower-semicontinuous with respect to various weak topologies for $w$ and $\rho$. This property is used in~\onlinecite{CarrilloLisiniSavareSlepcev10} to construct gradient-flow solutions with respect to this metric, in the sense of~\onlinecite{AmbrosioGigliSavare05}, obtaining various other properties in the process.

However, if $\alpha$ is \emph{not} concave, then this metric need not be well-defined; for instance, if $\lim_{r\to\infty} \alpha(r)/r =+\infty$, then it is not difficult to see that for each $\rho_0,\rho_1$ with $\int(\rho_0-\rho_1)=0$ the infimum in~\eqref{def:d-BB} is zero\cite{Note2}.
The important case $\alpha(\rho) = \rho^2$ (BEP($m$) and KMP) is an example of this; the natural metric~\eqref{def:d-BB} is therefore not well defined for these systems.

\subsection*{Generalizations to other models}
Although the fundamental question of this paper---what is the natural gradient structure for heat conduction?---is most simply posed for the simplest of heat-conducting systems, the insights that we gained are also useful for more complex models. For instance, in the `variational-modeling' philosophy~\cite{PeletierVarMod14TR} one constructs mathematical models of real-world systems by choosing a driving functional and an Onsager operator. These two choices fully determine the system. For this methodology to function well, one needs to understand how to choose these components---and this is exactly the question of this paper.

\subsection*{Acknowledgements} 
The authors are grateful for many interesting discussions with Jin Feng. 
MAP and KV acknowledge the support of NWO VICI grant 639.033.008.

\appendix

\section{A Girsanov Formula}
\label{app:Girsanov}

Here we prove \eqref{girsbol}. We refer to \onlinecite{stroockvaradhan} for more details on the
proof of the Girsanov formula in the context of diffusion processes; the formula is proved for bounded coefficients in~\onlinecite[Th.~6.4.2]{stroockvaradhan}, and the extension to unbounded coefficients can be done with the methods of~\onlinecite[Ch.~11]{stroockvaradhan}.

Consider a Markov process with generator
\begin{equation}
L=\frac12 \sum_i \alpha^i(x)\partial_i^2+\sum_i \beta^i(x)\partial_i
\end{equation}
and another Markov process obtained from $L$ by addition of an extra drift term
of the form
\begin{equation}
\hat {L} =\frac12 \sum_i \alpha^i(x)\partial_i^2+\sum_i \beta^i(x)\partial_i + \sum_i \alpha^i(x) \gamma^i_t(x)\partial_i.
\end{equation}
These two processes correspond to the following stochastic differential equations:
\begin{equation}
d x_t^i = \beta^i(x_t) \,dt + \sqrt{ \alpha^i(x_t) } \,dB_t^i
\end{equation}
and
\begin{equation}
d \hat x_t^i = \left(\beta^i(x_t) + \alpha^i(x_t)\gamma^i_t(x_t)\right) dt + \sqrt{ \alpha^i(x_t) }\, dB_t^i.
\end{equation}
Let us call $\pee$, resp.\ $\hat{\pee}$ the path space measure of the $x$, resp.\ $\hat{x}$ process.
The Girsanov formula then gives the Radon Nikodym derivative of the $\hat{x}$ process with respect to
the $x$ process:
\be\label{girsanooo}
\frac{d\hat\pee}{d\pee} = \exp\left(\sum_i \left( \int_0^T  \gamma_t^i(x_t)( dx_t^i - \beta^i(x_t)\,dt) - \frac{1}{2} \int_0^T (\gamma_t^i)^2(x_t)\alpha^i(x_t)\, dt   \right)\right).
\ee
As a consequence,
in the case  $\gamma_t^i = \gamma (\frac{i}{N}, t)$, we find
\begin{eqnarray}
\frac{1}{N} \log \frac{d \hat {\mathbb P}}{d \mathbb P}
&=&\frac{1}{N} \sum_i \left( \int_0^T  \gamma_t^i(x_t)( dx_t^i - \beta^i(x_t)\,dt) - \frac{1}{2} \int_0^T (\gamma_t^i)^2(x_t)\alpha^i(x_t)\, dt  \right)
\nonumber\\
&=& \frac{1}{2N} \sum_i  \int_0^T (\gamma_t^i)^2(x_t)\alpha^i(x_t)\, dt + \mathcal M,
\end{eqnarray}
where
\begin{equation*}
\mathcal M := \frac{1}{N}\sum_i \int_0^T  \gamma_t^i(x_t)\bigl[ dx_t^i - \beta^i(x_t)\,dt - \alpha^i(x_t)\gamma^i_t(x_t)\, dt\bigr]
\end{equation*}
is a martingale under $\hat \pee$.
We can now apply this to the case of Radon-Nikodym derivative of the BEP$(m)$ with respect to the WABEP$(m)$:
\begin{equation*}
L_N^{\mathrm{BEP}(m)} =  \sum_i 2z_i z_{i+1}
\left(\partial_i-\partial_{i+1}\right)^2 - m (z_i-z_{i+1})\left(\partial_i-\partial_{i+1}\right)
\end{equation*}
and
\begin{equation*}
L_{N,H}^{\mathrm{WABEP}(m)} =  \sum_i 2z_i z_{i+1}
\left(\partial_i-\partial_{i+1}\right)^2 - m (z_i-z_{i+1})\left(\partial_i-\partial_{i+1}\right)
- E_i z_i z_{i+1} \left( \partial_i - \partial_{i+1} \right).
\end{equation*}
We make the change of variables to $y_i:=z_i-z_{i+1}$, and defining $\alpha_i(y)=4z_i z_{i+1}$, $\beta_i(y)=-my_i$ and $\gamma_i(y)=-\frac14E_i=-\frac14E(\frac{i}{N}, u) $
we arrive at the same situation as before to obtain
\begin{equation*}
\frac1N\log \frac{d \mathbb P_{N,H}^{\mathrm{WABEP}(m)}}{d  {\mathbb P_N^{\mathrm{BEP}(m)}}} = \frac18  \int_0^T \sum_i  E_i^2  z_i z_{i+1}  dt + \mathcal M,
\end{equation*}
where $\mathcal M$ is a martingale under $\pee^{\mathrm{WABEP}(m)}_{N,H}$.


\begin{thebibliography}{41}
\providecommand{\natexlab}[1]{#1}
\providecommand{\url}[1]{\texttt{#1}}
\expandafter\ifx\csname urlstyle\endcsname\relax
  \providecommand{\doi}[1]{doi: #1}\else
  \providecommand{\doi}{doi: \begingroup \urlstyle{rm}\Url}\fi

\bibitem[Adams et~al.(2011)Adams, Dirr, Peletier, and
  Zimmer]{AdamsDirrPeletierZimmer11}
S.~Adams, N.~Dirr, M.~A. Peletier, and J.~Zimmer.
\newblock From a large-deviations principle to the {W}asserstein gradient flow:
  A new micro-macro passage.
\newblock \emph{Communications in Mathematical Physics}, 307:\penalty0
  791--815, 2011.

\bibitem[Adams et~al.(2012)Adams, Dirr, Peletier, and
  Zimmer]{AdamsDirrPeletierZimmer12TR}
S.~Adams, N.~Dirr, M.~A. Peletier, and J.~Zimmer.
\newblock Large deviations and gradient flows.
\newblock \emph{Arxiv preprint arXiv:1201.4601}, 2012.
\newblock Submitted.

\bibitem[Duong et~al.(2013{\natexlab{a}})Duong, Laschos, and
  Renger]{DuongLaschosRenger13}
M.~H. Duong, V.~Laschos, and D.~R.~M. Renger.
\newblock Wasserstein gradient flows from large deviations of many-particle
  limits.
\newblock \emph{ESAIM: Control, Optimisation and Calculus of Variations},
  E-first, 2013{\natexlab{a}}.

\bibitem[Peletier et~al.(2013)Peletier, Renger, and
  Veneroni]{PeletierRengerVeneroni13}
M.~A. Peletier, D.~R.~M. Renger, and M.~Veneroni.
\newblock Variational formulation of the {F}okker-{P}lanck equation with decay:
  A particle approach.
\newblock \emph{Communications in Contemporary Mathematics}, 15\penalty0
  (5):\penalty0 1350017, 2013.

\bibitem[Duong et~al.(2012)Duong, Peletier, and
  Zimmer]{DuongPeletierZimmer12TR}
M.~H. Duong, M.~A. Peletier, and J.~Zimmer.
\newblock Conservative-dissipative approximation schemes for a generalized
  {K}ramers equation.
\newblock \emph{arXiv preprint arXiv:1206.2859}, 2012.
\newblock Submitted.

\bibitem[Duong et~al.(2013{\natexlab{b}})Duong, Peletier, and
  Zimmer]{DuongPeletierZimmer13}
M.~H. Duong, M.~A. Peletier, and J.~Zimmer.
\newblock {GENERIC} formalism of a {V}lasov-{F}okker-{P}lanck equation and
  connection to large-deviation principles.
\newblock \emph{Nonlinearity}, 26\penalty0 (2951-2971), 2013{\natexlab{b}}.

\bibitem[Renger(2013)]{Renger13TH}
D.~R.~M. Renger.
\newblock \emph{Microscopic Interpretation of Wasserstein Gradient Flows}.
\newblock PhD thesis, Technische Universiteit Eindhoven, 2013.

\bibitem[Mielke et~al.(2013)Mielke, Renger, and
  Peletier]{MielkeRengerPeletier13TR}
Alexander Mielke, D.~R.~M Renger, and M.~A. Peletier.
\newblock On the relation between gradient flows and the large-deviation
  principle, with applications to {M}arkov chains and diffusion.
\newblock \emph{arXiv preprint arXiv:1312.7591}, 2013.

\bibitem[Bonetto et~al.(2000)Bonetto, Lebowitz, and
  Rey-Bellet]{BonettoLebowitzRey-Bellet00}
F.~Bonetto, J.~L. Lebowitz, and L.~Rey-Bellet.
\newblock {Fourier's law: A challenge to theorists}.
\newblock In A.~Fokas, A.~Grigoryan, T.~Kibble, and B.~Zegarlinski, editors,
  \emph{Mathematical Physics 2000}, pages 128--150. Imperial College Press,
  London, 2000.

\bibitem[Dhar(2008)]{Dhar08}
Abhishek Dhar.
\newblock Heat transport in low-dimensional systems.
\newblock \emph{Advances in Physics}, 57\penalty0 (5):\penalty0 457--537, 2008.

\bibitem[Kipnis et~al.(1982)Kipnis, Marchioro, and
  Presutti]{KipnisMarchioroPresutti82}
C~Kipnis, C~Marchioro, and E~Presutti.
\newblock Heat flow in an exactly solvable model.
\newblock \emph{Journal of Statistical Physics}, 27\penalty0 (1):\penalty0
  65--74, 1982.

\bibitem[Bernardin and Olla(2005)]{BernardinOlla05}
C.~Bernardin and S.~Olla.
\newblock Fourier's law for a microscopic model of heat conduction.
\newblock \emph{Journal of Statistical Physics}, 121\penalty0 (3-4):\penalty0
  271--289, 2005.

\bibitem[Basile et~al.(2006)Basile, Bernardin, and Olla]{BasileBernardinOlla06}
G.~Basile, C.~Bernardin, and S.~Olla.
\newblock Momentum conserving model with anomalous thermal conductivity in low
  dimensional systems.
\newblock \emph{Physical review letters}, 96\penalty0 (20):\penalty0 204303,
  2006.

\bibitem[Bernardin(2007)]{Bernardin07}
C.~Bernardin.
\newblock Hydrodynamics for a system of harmonic oscillators perturbed by a
  conservative noise.
\newblock \emph{Stochastic processes and their applications}, 117\penalty0
  (4):\penalty0 487--513, 2007.

\bibitem[Bricmont and Kupiainen(2007)]{BricmontKupiainen07}
J.~Bricmont and A.~Kupiainen.
\newblock Towards a derivation of {F}ourier's law for coupled anharmonic
  oscillators.
\newblock \emph{Communications in mathematical physics}, 274\penalty0
  (3):\penalty0 555--626, 2007.

\bibitem[Bernardin(2008)]{Bernardin08}
C.~Bernardin.
\newblock Stationary nonequilibrium properties for a heat conduction model.
\newblock \emph{Physical Review E}, 78\penalty0 (2):\penalty0 021134, 2008.

\bibitem[Basile et~al.(2009)Basile, Bernardin, and Olla]{BasileBernardinOlla09}
G.~Basile, C.~Bernardin, and S.~Olla.
\newblock Thermal conductivity for a momentum conservative model.
\newblock \emph{Communications in Mathematical Physics}, 287\penalty0
  (1):\penalty0 67--98, 2009.

\bibitem[Bernardin and Olla(2011)]{BernardinOlla11}
C.~Bernardin and S.~Olla.
\newblock Transport properties of a chain of anharmonic oscillators with random
  flip of velocities.
\newblock \emph{Journal of Statistical Physics}, 145\penalty0 (5):\penalty0
  1224--1255, 2011.

\bibitem[Giardin{\'a} et~al.(2007)Giardin{\'a}, Kurchan, and
  Redig]{GiardinaKurchanRedig07}
Cristian Giardin{\'a}, Jorge Kurchan, and Frank Redig.
\newblock Duality and exact correlations for a model of heat conduction.
\newblock \emph{Journal of mathematical physics}, 48:\penalty0 033301, 2007.

\bibitem[Giardin{\'a} et~al.(2009)Giardin{\'a}, Kurchan, Redig, and
  Vafayi]{GiardinaKurchanRedigVafayi09}
C.~Giardin{\'a}, J.~Kurchan, F.~Redig, and K.~Vafayi.
\newblock Duality and hidden symmetries in interacting particle systems.
\newblock \emph{Journal of Statistical Physics}, 135\penalty0 (1):\penalty0
  25--55, 2009.

\bibitem[Giardin{\'a} et~al.(2010)Giardin{\'a}, Redig, and
  Vafayi]{GiardinaRedigVafayi10}
C.~Giardin{\'a}, F.~Redig, and K.~Vafayi.
\newblock Correlation inequalities for interacting particle systems with
  duality.
\newblock \emph{Journal of Statistical Physics}, 141\penalty0 (2):\penalty0
  242--263, 2010.

\bibitem[Redig and Vafayi(2011)]{RedigVafayi11}
F.~Redig and K.~Vafayi.
\newblock Weak coupling limits in a stochastic model of heat conduction.
\newblock \emph{Journal of Mathematical Physics}, 52:\penalty0 093303, 2011.

\bibitem[Grosskinsky et~al.(2011)Grosskinsky, Redig, and
  Vafayi]{GrosskinskyRedigVafayi11}
Stefan Grosskinsky, Frank Redig, and Kiamars Vafayi.
\newblock Condensation in the inclusion process and related models.
\newblock \emph{J. Stat. Phys.}, 142\penalty0 (5):\penalty0 952--974, 2011.
\newblock ISSN 0022-4715.

\bibitem[NOTE({\natexlab{a}})]{Note1}
This generator is twice the one in~\onlinecite{KipnisMarchioroPresutti82,BertiniGabrielliLebowitz05}, thus accelerating the
  evolution by a factor~$2$.

\bibitem[Dembo and Zeitouni(1998)]{DemboZeitouni98}
A.~Dembo and O.~Zeitouni.
\newblock \emph{{Large deviations techniques and applications}}.
\newblock Springer Verlag, 1998.

\bibitem[Kipnis and Landim(1999)]{KipnisLandim99}
C.~Kipnis and C.~Landim.
\newblock \emph{{Scaling limits of interacting particle systems}}.
\newblock Springer Verlag, 1999.

\bibitem[Guo et~al.(1988)Guo, Papanicolaou, and Varadhan]{GPV}
M.~Z. Guo, G.~C. Papanicolaou, and S.~R.~S. Varadhan.
\newblock Nonlinear diffusion limit for a system with nearest neighbor
  interactions.
\newblock \emph{Comm. Math. Phys.}, Volume 118, Number 1:\penalty0 31--59,
  1988.

\bibitem[Yau(1991)]{Yau91}
H.-T. Yau.
\newblock Relative entropy and hydrodynamics of {G}inzburg-{L}andau models.
\newblock \emph{Letters in Mathematical Physics}, 22\penalty0 (1):\penalty0
  63--80, 1991.
\newblock ISSN 0377-9017.

\bibitem[Grunewald et~al.(2009)Grunewald, Otto, Villani, and
  Westdickenberg]{GrunewaldOttoVillaniWestdickenberg09}
Natalie Grunewald, Felix Otto, C{\'e}dric Villani, and Maria~G. Westdickenberg.
\newblock A two-scale approach to logarithmic {S}obolev inequalities and the
  hydrodynamic limit.
\newblock \emph{Ann. Inst. Henri Poincar\'e Probab. Stat.}, 45\penalty0
  (2):\penalty0 302--351, 2009.
\newblock ISSN 0246-0203.

\bibitem[Kipnis et~al.(1989)Kipnis, Olla, and Varadhan]{KOV}
C.~Kipnis, S.~Olla, and S.~R.~S. Varadhan.
\newblock Hydrodynamics and large deviations for simple exclusion processes.
\newblock \emph{Commun. Pure Appl. Math.}, 42:\penalty0 115--137, 1989.

\bibitem[Donsker and Varadhan(1989)]{DV}
M.D. Donsker and S.~R.~S. Varadhan.
\newblock Large deviations from a hydrodynamic scaling limit.
\newblock \emph{Comm. Pure Appl. Math.}, 42:\penalty0 243--270, 1989.

\bibitem[Bertini et~al.(2005)Bertini, Gabrielli, and
  Lebowitz]{BertiniGabrielliLebowitz05}
L.~Bertini, D.~Gabrielli, and J.~L. Lebowitz.
\newblock {Large deviations for a stochastic model of heat flow}.
\newblock \emph{Journal of Statistical Physics}, 121\penalty0 (5):\penalty0
  843--885, 2005.

\bibitem[Bertini et~al.(2009)Bertini, De~Sole, Gabrielli, Jona-Lasinio, and
  Landim]{BertiniDe-SoleGabrielliJona-LasinioLandim09}
L.~Bertini, A.~De~Sole, D.~Gabrielli, G.~Jona-Lasinio, and C.~Landim.
\newblock Towards a nonequilibrium thermodynamics: A self-contained macroscopic
  description of driven diffusive systems.
\newblock \emph{J. Stat. Phys.}, 135\penalty0 (5-6):\penalty0 857--872, 2009.
\newblock ISSN 0022-4715.

\bibitem[Callen(1985)]{Callen85}
H.~B. Callen.
\newblock \emph{Thermodynamics and an Introduction to Thermostatistics}.
\newblock Wiley, 1985.

\bibitem[Peletier(1997)]{Peletier97TH}
M.~A. Peletier.
\newblock \emph{Problems in Degenerate Diffusion}.
\newblock PhD thesis, University of Leiden, 1997.

\bibitem[Peletier(2014)]{PeletierVarMod14TR}
M.~A Peletier.
\newblock Variational modelling: Energies, gradient flows, and large
  deviations.
\newblock \emph{Arxiv preprint arXiv:1402:1990}, 2014.

\bibitem[Benamou and Brenier(2000)]{BenamouBrenier00}
J.-D. Benamou and Y.~Brenier.
\newblock A computational fluid mechanics solution to the {M}onge-{K}antorovich
  mass transfer problem.
\newblock \emph{Numerical Mathematics}, 84:\penalty0 375--393, 2000.

\bibitem[Carrillo et~al.(2010)Carrillo, Lisini, Savar{\'e}, and
  Slepcev]{CarrilloLisiniSavareSlepcev10}
J.A. Carrillo, S.~Lisini, G.~Savar{\'e}, and D.~Slepcev.
\newblock {Nonlinear mobility continuity equations and generalized displacement
  convexity}.
\newblock \emph{Journal of Functional Analysis}, 258\penalty0 (4):\penalty0
  1273--1309, 2010.
\newblock ISSN 0022-1236.

\bibitem[Ambrosio et~al.(2005)Ambrosio, Gigli, and
  Savar\'e]{AmbrosioGigliSavare05}
L.~Ambrosio, N.~Gigli, and G.~Savar\'e.
\newblock \emph{Gradient Flows in Metric Spaces and in the Space of Probability
  Measures}.
\newblock Lectures in mathematics ETH Z{\"u}rich. Birkh{\"a}user, 2005.

\bibitem[NOTE({\natexlab{b}})]{Note2}
For this one can observe that transporting mass is vanishingly `cheap' as $\rho
  (x)\to \infty $, implying that Dirac delta functions can be transported with
  zero cost. Since $\rho _0$ and $\rho _1$ can be transported to purely atomic
  measures with arbitrarily small cost, it follows that the infimum is zero.

\bibitem[Stroock and Varadhan(1979)]{stroockvaradhan}
Daniel~W. Stroock and S.~R.~Srinivasa Varadhan.
\newblock \emph{Multidimensional diffusion processes}, volume 233 of
  \emph{Grundlehren der Mathematischen Wissenschaften [Fundamental Principles
  of Mathematical Sciences]}.
\newblock Springer-Verlag, Berlin, 1979.
\newblock ISBN 3-540-90353-4.

\end{thebibliography}
\end{document}